\patchcmd\longtable{\par}{\if@noskipsec\mbox{}\fi\par}{}{}
\def\maxwidth{\ifdim\Gin@nat@width>\linewidth\linewidth\else\Gin@nat@width\fi}
\def\maxheight{\ifdim\Gin@nat@height>\textheight\textheight\else\Gin@nat@height\fi}
\def\fps@figure{htbp}
\newlength{\cslhangindent}
\newlength{\csllabelwidth}
\newtheorem{theorem}{Theorem}
\newtheorem{definition}{Definition}
\title{Improvements to Modern Portfolio Theory based models applied to electricity systems}
\author[1]{ Gabriel Malta Castro}
\author[2] {Claude Klöckl\thanks{Corresponding author: Feistmantelstraße 4, 1180 Wien, Austria. claude.kloeckl@boku.ac.at}}
\author[2] {Peter Regner}
\author[2] {Johannes Schmidt}
\author[1] {Amaro Olimpio Pereira Jr.}
\affil[1]{%
  Energy Planning Program, Graduate School of Engineering, Federal University of Rio de Janeiro, Rio de Janeiro, Brazil.}
\affil[2]{%
  Institute for Sustainable Economic Development, University of Natural Resources and Life Science, Vienna, Austria}
\begin{document}
\maketitle
\begin{abstract}
With the increase of variable renewable energy sources (VRES) share in electricity systems, many studies were developed in order to determine their optimal technological and spatial mix.
Modern Portfolio Theory (MPT) has been frequently applied in this context.
However, some crucial aspects, important in energy planning, are not addressed by these analyses.
We, therefore, propose several improvements and evaluate how each change in formulation impacts results. More specifically, we address generation costs, system demand, and firm energy output, present a formal model and apply it to the case of Brazil. 
We found that, after including our proposed modifications, the resulting efficient frontier differs strongly from the one obtained in the original formulation. Portfolios with high output standard deviation are not able to provide a firm output level at competitive costs.
Furthermore, we show that diversification plays an important role in smoothing output from VRES portfolios.
\end{abstract}

\hypertarget{keywords}{%
\section*{Keywords}\label{keywords}}
\addcontentsline{toc}{section}{keywords}

Optimization, Diversification, Portfolio selection, Renewable energy sources, CVaR.

\hypertarget{definitions}{%
\section*{Definitions}\label{definitions}}
\addcontentsline{toc}{section}{Definitions}

\begin{description}
\item[CF]
Capacity factor
\item[CV]
Coefficient of variation
\item[CVaR]
Conditional Value-at-Risk
\item[LCOE]
Levelized Cost of Electricity
\item[MPT]
Modern Portfolio Theory
\item[PV]
Photovoltaic
\item[SD]
Standard deviation
\item[VaR]
Value-at-Risk
\item[VRES]
Variable renewable energy sources
\end{description}

\hypertarget{introduction}{%
\section{Introduction}\label{introduction}}

Variable renewable energy sources (VRES), such as wind and photovoltaic (PV) power plants, have increased their share in electricity systems all over the world during this century.
In the next decades, the share of those sources in global power systems tends to increase even more, possibly reaching (almost) 100\% of renewable source penetration, as, following Paris Agreement consensus, the world will have to reduce its greenhouse gas emissions to avoid a more severe climate change \parencite{bogdanovRadicalTransformationPathway2019, delucchiProvidingAllGlobal2011, haegelTerawattscalePhotovoltaicsTransform2019, jacobsonMatchingDemandSupply2018a, schmidtOptimalMixSolar2016}.
As VRES cannot be dispatched, a system with high shares of VRES may run the risk of not being able to meet demand in some moments.
Furthermore, it is possible that total generation is sometimes higher than load.
To mitigate those imbalances, storage systems and demand response schemes will play an important role in balancing supply and demand  \parencite{delucchiProvidingAllGlobal2011}.

However, the need for such integration technologies can be reduced if variability in generation is reduced, potentially resulting in a higher level of firm power supply. A way to reduce variability is by increasing technological and spatial diversification, i.e., by combining generation from different technologies at different locations. 
Numerous studies have shown that a system consisting of power plants that have low or negative correlation among themselves can smooth out the resulting generation profile by lowering its variance.
Many studies have investigated the complementarity of renewable energy sources in different locations to better understand and quantify it\footnote{A recent review on complementarity of VRES has been published by \textcite{juraszReviewComplementarityRenewable2020}}.

Some authors have used optimization models to determine the optimum mix of locations and technologies. Often, the main methodology used is based on Markowitz Modern Portfolio Theory (MPT) \parencite{markowitzPortfolioSelection1952,dellano-pazEnergyPlanningModern2017}.
Originally, MPT is applied to find the optimal portfolios of financial assets, based on three parameters:

\begin{enumerate}
\def\labelenumi{\arabic{enumi}.}
\item
  maximize return or yield, given by average return from assets that compose the portfolio;
\item
  minimize risk, given by standard deviation (SD) of the portfolio return in a given period;
\item
  maintain a fixed budget.
\end{enumerate}

In general, when MPT formulation has been applied to VRES systems, the portfolio capacity factor is the ``yield'' or ``return,'' generation standard deviation is the ``risk'' and portfolio installed capacity is the ``budget'' \parencite{chuppOptimalWindPortfolios2012,cunhaDesigningElectricityGeneration2015,degeilhQuantitativeApproachWind2011,drakeWhatExpectGreater2007,huGeographicalOptimizationVariable2019,rombautsOptimalPortfoliotheorybasedAllocation2011,roquesOptimalWindPower2010,santos-alamillosExploringMeanvariancePortfolio2017,scalaPortfolioAnalysisGeographical2019,shahriariCapacityValueOptimal2018,thomaidisOptimalManagementWind2016a}.
The set of optimal solutions is used to construct the \emph{efficient frontier}. A point is part of the frontier if it is not dominated by any other feasible portfolio, i.e. if there is no portfolio that is better in both expected generation and variance whilst maintaining the installed capacity at a fixed value.

Some innovations and changes in the formulation to address issues related to VRES characteristics have been proposed.
For instance, \textcite{roquesOptimalWindPower2010} restrict output at peak-load hours as ``yield'', others use, as ``risk'', the standard deviation of hourly output differences instead of hourly output \parencite{novacheckDiversifyingWindPower2017, rombautsOptimalPortfoliotheorybasedAllocation2011,roquesOptimalWindPower2010}, \textcite{rombautsOptimalPortfoliotheorybasedAllocation2011} represent transmission capacity constraints, and \textcite{shahriariCapacityValueOptimal2018} analyze different scales of spatial and temporal aggregation.
However, with the exception of \textcite{novacheckDiversifyingWindPower2017}\footnote{In this work, the authors minimize portfolio installed capacity for a given mean output level.}, all studies maintain the same structure of ``return,'' ``risk'' and ``budget,'' i.e., mean output as ``return,'' output standard deviation as ``risk'' and installed capacity as ``budget.''
We propose, in this work, some changes in this structure in order to better reflect the goals usually optimized for electricity systems: reliability and low cost.

First, we propose to change the measure of risk.
Standard deviation metrics measure deviations from the mean symmetrically. Therefore, output deviations, regardless of being higher or lower than the expected level, have the same impact in the MPT model.
That symmetry is poorly justified as energy shortfall usually has higher impacts than energy surplus.
If portfolio generation is below expected levels, backup power plants have to be used to supply the demand, incurring additional costs.
In some extreme cases, no plant will be available as a backup and part of the load will not be met, causing a loss-of-load event and consequently very significant cost.
On the other hand, excess generation may require some management to avoid instabilities in the grid, but excess electricity can be stored for later use, can supply controllable secondary loads or can be simply curtailed \parencite{haleIntegratingSolarFlorida2018, nelsonInvestigatingEconomicValue2018}. Costs of excess generation are therefore very low or even negative.

Moreover, the variance by itself does not fully determine the shape of the distribution of the portfolio generation.
\textcite{huGeographicalOptimizationVariable2019} have shown, in an ex-post analysis after performing an efficient frontier optimization, that the regions in the extremes of the efficient frontier curve (portfolios with high and low standard deviation) tend to have lower values for the power output at lower percentiles. Therefore, some points in the curve may be providing more firm capacity to the system than others.
Therefore, we propose, as an improvement to the formulation, to incorporate a constraint to obtain portfolios that are able to maintain a minimum generation level at a given risk.

The studies that use MPT optimization have, as one objective, the minimization of the SD of portfolio generation.
However, this ignores load variability and its interaction with portfolio generation.
Therefore, instead of minimizing the standard deviation of generation, we propose to minimize the standard deviation of electricity balance\footnote{Electricity balance is the difference between portfolio generation and system demand.}.
\textcite{degeilhQuantitativeApproachWind2011} propose a method to include system demand in the formulation. They concluded that its effect would be negligible, and thus the authors did not further explore it. However, they assumed that the load would have the same weight in the system as one single wind turbine and, in their data set, the correlation of demand to each power plant output was close to 0. The latter reason is specific to their data set and the former can be solved by defining a proper weight to load in comparison to the portfolio. In the present work, we propose a way to incorporate load in the optimization problem as a power plant which production is negative.

As the ``return'' component, instead of using the capacity factor, we propose to use power plant levelized cost of electricity (LCOE), as this better reflects system planning goals. This change includes plant costs and maintains capacity factor of each plant as a decision factor, as it is a component of LCOE calculation.

To test and compare those different improvements, a case study was performed using wind power and PV generation as well as demand data from Brazil.
Due to its large territorial extension, Brazil is an appropriate case for our study, as geographical dispersion is more prone to less correlated generation profiles. Each change in the formulation will be introduced at a time, in order to assess its individual impact on the results.

The remainder of this text is organized as follows.
Section \ref{methods} presents the methodology and how each improvement changes the model formulation. Section \ref{results-and-discussion} compares results of the model with and without the proposed changes for each change individually and discusses them. Finally, we present general implications of our improvements in Section \ref{conclusions}.

\hypertarget{methods}{%
\section{Methods and Data}\label{methods}}
The MPT formulation in its original form, applied to financial assets, is outlined in the following. Let \(r_{i}\) be a random variable giving the return of asset \(i\),  \(R_i = E(r_i)\) is its expected value and \(\sigma_{ij} = cov(r_i,r_j)\) is the covariance of returns from assets \(i\) and \(j\). Therefore, considering no short position is allowed:

\begin{align}
\label{eq:MarkObj}
& \underset{X_{i}}{\text{maximize}} & & R(X_{i}) = \sum_{i=1}^{N} X_i R_i \\
\label{eq:MarkSD}
& \text{subject to} & & \sqrt{\sum_{i=1}^{N}\sum_{j=1}^{N}X_iX_j\sigma_{ij}} \le \sigma_P  \\
\label{eq:MarkSumX}
&&& \sum_{i=1}^{N}X_i = 1 &\\
 \label{eq:MarkNonNeg}
& & & X_{i} \ge 0 \quad \forall i & 
\end{align}

\(R(X_i)\) is portfolio return, \(X_i\) is the share of asset \(i\) in the portfolio, \(N\) is the number of assets and \(\sigma_P\) is the maximum allowed portfolio standard deviation.
The optimization is solved for different values of \(\sigma_P\in [0,\max_{i}(\sigma_{i})]\) in order to obtain the efficient frontier\footnote{There are two equivalent alternative formulations. In the first one, portfolio variance is minimized and portfolio output is constrained and its minimum value varies at each iteration. In the second one, constraint \eqref{eq:MarkSD} is removed and the objective function changes to \(\sum_{i=1}^{N} X_i R_i - \lambda \sqrt{\sum_{i=1}^{N}\sum_{j=1}^{N}X_iX_j\sigma_{ij}}\). In this case, the problem is solved for different values of \(\lambda \in [0,\infty]\) in order to obtain the efficient frontier.}.
As stated in equation \eqref{eq:MarkSumX}, the sum of weights is constrained to 1. Consequently, the solution gives the share of each asset in the portfolio instead of the absolute amount of money invested in each asset.

When adapted to energy systems, the mathematical formulation is very similar.
Instead of financial assets, the portfolio consists of VRES plants with different output profiles.
\(X_i\) represents plant \(i\)'s share in the portfolios installed capacity.
Portfolio return, \(R\), is given by its mean capacity factor or, equivalently as the installed capacity is fixed, its mean output. This formulation allows to derive the efficient frontier by increasingly restricting the standard deviation (eq. \eqref{eq:MarkSD}), reducing the value of $\sigma_P$. For each value of $\sigma_P$, the model will determine the optimal portfolio (eq. \eqref{eq:MarkObj}). Total installed capacity is kept at a fixed value in all iterations  (eq. \eqref{eq:MarkSumX}). The efficient frontier can consequently be plotted by plotting $R(X_i)$ vs. $\sigma_P$.

By analyzing the objective function and the constraints, it is clear that there are three different portfolio parameters, in which two of them are optimized and can vary at each iteration: standard deviation (eq. \eqref{eq:MarkSD}) and portfolio output (eq. \eqref{eq:MarkObj}).
The remaining parameter --- installed capacity --- is kept at a fixed value in all iterations (eq. \eqref{eq:MarkSumX}).
In that way, it is possible to plot a two-dimensional curve containing the efficient frontier, the non-fixed parameters being the axes.

\hypertarget{improvement-1-cost-minimization}{%
\subsection{Improvement 1: cost minimization}\label{improvement-1-cost-minimization}}
Our first improvement is to minimize costs of the portfolio instead of maximizing capacity factors, as costs can be more directly interpreted by system designers.
First, instead of maximizing generation, as in equation \eqref{eq:MarkObj}, we minimize the portfolio installed capacity, fixing total generation. This particular variation was used in \textcite{novacheckDiversifyingWindPower2017}.
As total generation is directly related to the installed capacity, in principle, it does not matter whether generation is maximized, fixing capacity, or capacity is minimized, fixing generation. Both formulations are equivalent to the maximization of the capacity factor, given by the ratio between mean generation and installed capacity. The first formulation achieves this by maximizing the numerator, and the second one, by minimizing the denominator.
However, as the standard deviation in each formulation relates to a different fixed reference -- either capacity or generation --, in the fixed capacity formulation, portfolios which standard deviation is lower than the standard deviation of the portfolio with highest Sharpe ratio are not part of the efficient frontier in the latter case\footnote{For a demonstration of this, see Appendix \ref{proof}.}.
The Sharpe ratio is the ratio between capacity factor and standard deviation \parencite{shahriariCapacityValueOptimal2018} and represents the slope of the line connecting the origin to the point corresponding to the portfolio's parameters.
Similarly, \textcite{huGeographicalOptimizationVariable2019} and \textcite{thomaidisOptimalManagementWind2016a} used the inverse of this ratio, called coefficient of variation (CV), in their analysis. In the remainder of this text, we will refer to portfolios with lower SD and higher CV than the minimal CV portfolio as ``\textbf{LowSD\_HighCV} portfolios''.

Subsequently, to derive costs, we additionally multiply capacities by their respective unit costs. 
If all candidate technologies or plants have the same costs, this is equivalent to just minimizing capacity, of course. However, under cost differences, a different portfolio will be chosen.  To summarize, the formulation proposed here finds the efficient frontier that achieves the lowest portfolio costs and lowest standard deviations, at the same generation level.
Equations \eqref{eq:CostObj}, \eqref{eq:CostSD}, \eqref{eq:CostIsoGen} and \eqref{eq:CostNonNeg} show the proposed formulation.

\begin{align}
 &\underset{P_{i}}{\text{minimize}}& &C_P(P_{i}) = \sum_{i=1}^{N}P_i C_i \mu_i & & \label{eq:CostObj}
\\
 &\text{subject to} & &\sqrt{\sum_{i=1}^{N}\sum_{j=1}^{N}P_iP_j\sigma_{ij}} \le \sigma_P \label{eq:CostSD}
& \\
 & &&\sum_{i=1}^{N}P_i \mu_i = {K} \label{eq:CostIsoGen} \\
& & & P_{i} \ge 0 \quad \forall i & \label{eq:CostNonNeg}
\end{align}

\(C_P\) is the mean portfolio cost, in \$/MWh, \(C_i\) is plant \(i\) cost in \$/MWh, \(P_i\) is plant \(i\) installed capacity, in MW, and \(\mu_i\) is average output per capacity, i.e., capacity factor of plant \(i\) and \(K\) is an arbitrary fixed value corresponding to the mean portfolio output, in MW.

\hypertarget{improvement-2-demand-correlation}{%
\subsection{Improvement 2: demand correlation}\label{improvement-2-demand-correlation}}

In order to represent demand in the optimization problem, a demand profile is introduced as an additional generator\footnote{From now on, this special generator will be referred as \textbf{DemandGen}.}.
That generator has, however, some special characteristics.
It does not produce electricity, but consumes it. To represent that characteristic, the time series of outputs always has negative values.
Therefore, the correlation between \textbf{DemandGen} and the generation of another plant is the correlation between the demand profile and the generation profile of that plant multiplied by -1. Furthermore, the weight of \textbf{DemandGen} in the portfolio is not a decision variable in the optimization, but it is an input parameter to the model. This weight represents the relative size of \textbf{DemandGen} in comparison to regular generators.

In case of using a small value for \textbf{DemandGen} capacity, its influence will be almost irrelevant.
Conversely, a very high value may overestimate the correlation of generators and the demand.
Therefore, we defined the share of \textbf{DemandGen} to be equal to the peak load value. In our simulations, when using the original formulation, where total capacity is constant, a constraint sets the total portfolio installed capacity equal to the maximum demand value.
In the alternative formulation presented in Section \ref{improvement-1-cost-minimization}, that constraint has to be adapted because generation is kept constant, not capacity. Thus, mean portfolio generation is set equal to mean electricity demand.

To summarize, equations \eqref{eq:CostSDwLoad} and \eqref{eq:CostIsoGenwLoad} represents a cost minimization model which integrates demand: 
\begin{equation}
\sqrt{\sum_{i=1}^{N}\sum_{j=1}^{N}P_iP_j\sigma_{ij} + 2 \sum_{i=1}^{N} P_i P_L \sigma_{iL}} \le \sigma_P \label{eq:CostSDwLoad} 
\end{equation}

\begin{equation}
\sum_{i=0}^{N}P_i \mu_i = P_L \mu_L \label{eq:CostIsoGenwLoad}
\end{equation}

\(P_L\) is the peak load value, \(\sigma_{iL}\) is the covariance between plant \(i\) and \textbf{DemandGen} and \(\mu_L\) is the \textbf{DemandGen} capacity factor, i.e., a negative value representing the relation between mean demand and peak load.
According to equation \eqref{eq:CostIsoGenwLoad}, total mean generation must be equal to mean load.
The equation that represents the objective function \eqref{eq:CostObj} remains unchanged because there is no cost associated to the \textbf{DemandGen} generator.

It should be noted that, in the hypothetical case where a flat load profile is used instead of the real load profile, its variance, correlation and covariance to all the regular generators is 0 ($\sigma_{iL}$ in equation \eqref{eq:CostSDwLoad}).
Thus, the results will be the same as in the formulation that does not use a demand profile. That is expected, as a flat demand profile is an implicit assumption in the traditional model.

\hypertarget{improvement-3-supply-risk}{%
\subsection{Improvement 3: supply risk}\label{improvement-3-supply-risk}}

In the MPT formulation, given two portfolios with the same standard deviation, the one with higher mean output is preferred, regardless of the shape of the output distribution.
Whilst higher generation or lower standard deviation in Gaussian or any other symmetric distributions means higher chance of reaching a minimum output (or balance) level \parencite{scalaPortfolioAnalysisGeographical2019}, that is not necessarily true for the distribution resulting from the portfolio output.

Thus, we propose to incorporate a risk measure that guarantees a minimum energy balance level at a given risk.
Here, we define energy balance at each time step \(t\) as the difference between portfolio output and demand.
Ideally, energy balance would always be a non-negative value because that means that generation is equal or higher than the load at every moment and backup resources would not be necessary.
However, a system based on VRES with this level of security is unrealistic. Therefore, some shortage is accepted and the level of acceptance is given by the risk measure.

Value at Risk (VaR) is one potential risk measure \parencite{grotheSpatialDependenceWind2011}. VaR\textsubscript{\(\beta\)\%} is the \(\beta\)-percentile of the loss distribution. 
However, VaR lacks some properties, such as sub-additivity and convexity \parencite{rockafellarOptimizationConditionalValueatrisk2000}, so it is an imperfect measure to be incorporated into the optimization problem.
Moreover, VaR is based only on the relative number of periods with a negative energy balance. It does not inform about the intensity of those negative balances.
Thus, a more coherent risk measure is the Conditional Value-at-Risk (CVaR) \parencite{sarykalinValueatRiskVsConditional2008}, which is the mean of the lowest \(\beta\) values\footnote{See Appendix \ref{AppendixCvar} for details on the CVaR interpretation and the derivation of the risk constraints \eqref{eq:RiskCVaR}, \eqref{eq:RiskZ} and \eqref{eq:RiskNonNeg}.}. 

The formulation is as follows:

\begin{align}
 \underset{P_{i},Z_{t_{m}},\alpha}{\text{minimize}}&&&C_P(P_{i}) = \sum_{i=1}^{N}P_i C_i \mu_i &
\\
  \text{subject to} &&&\sqrt{\sum_{i=1}^{N}\sum_{j=1}^{N}P_iP_j\sigma_{ij} + 2 \sum_{i=1}^{N} P_i P_L \sigma_{iL}} \le \sigma_P & \label{eq:CVaRSDLoad} 
\\
 &&& \alpha - \frac {\sum_{m = 1}^{M} Z_{t_{m}}}{\beta M} \ge \omega \quad & \label{eq:RiskCVaR}
\\
 &&& \alpha - \sum_{i=1}^{N}(Y_{t_{m},i} P_i) - Y_{t_{m},L} P_L \le Z_{t_{m}} \quad &\forall m ~ in ~ M \subseteq T \label{eq:RiskZ}
  \\
 &&& Z_{t_{m}} \ge 0 \quad &\forall m ~ in ~ M \subseteq T \label{eq:RiskNonNeg}
\end{align}

Equations \eqref{eq:RiskCVaR}, \eqref{eq:RiskZ} and \eqref{eq:RiskNonNeg} are constraints that limit the mean of the lowest \(\beta\) values of energy balance, based on \textcite{rockafellarOptimizationConditionalValueatrisk2000} and \textcite{sarykalinValueatRiskVsConditional2008}.
The model uses samples $Y_{t_{m},i}$ of each plant's output, including \textbf{DemandGen}. The samples are derived from the time series $Y_{t,i}$ which are also used to determine mean generation, variance, and covariances.  
More specifically, \(Y_{t_{m},i}\) is plant \(i\)'s generation over capacity during sample \(m\)'s time step.
Therefore, \(Y_{t_{m},i}P_i\) is plant \(i\)'s generation and \(-Y_{t_{m},L}P_L\) is system demand at time step sample \(t_{m}\).
\(M\) is the total number of samples.
This means, that while the CVaR\textsubscript{$\beta$} is represented exactly by $\text{CVaR}_{\beta}=\alpha - \frac {\sum_{t = 1}^{T} Z_{t}}{\beta T}$, summing only over a subset of $M$ samples from $T$ yields an approximation of $\text{CVaR}_{\beta} \approx \alpha - \frac {\sum_{m = 1}^{M} Z_{t_{m}}}{\beta M}$ instead.
Some attention should be given to the sample size of $M$ because, as each sample adds one constraint in the optimization problem (see eq. \eqref{eq:RiskZ}), values of \(M\) too high may become too computationally expensive, taking a long time to complete.
Values of \(M\) too low may yield an inaccurate CVaR estimation.
In this study, we evaluated different values, and we concluded that \(M\) equal to 3000 samples using Latin Hypercube Sample (LHS)\footnote{Latin Hypercube Sample was used to ensure an evenly distributed sample from years, months and time of day.} techniques are a good compromise between performance and results accuracy and that value was used in all the optimization runs using this model formulation.

In equation \eqref{eq:RiskCVaR}, \(\beta\) and \(\omega\) are input parameters supplied to the model and they represent the confidence level and the balance limit, respectively.
The parameter \(\beta\) refers to the proportion of samples whose average balance must be higher than or equal to \(\omega\).
By setting \(\omega\) to 0, the mean of the \(\beta\) worse balance values will necessarily be a non-negative number.
This means that less than \(\beta\) time steps have energy shortage.
We chose 5\% as the value of \(\beta\) in this work.
The remaining parameters, \(\alpha\) and \(Z_{t_{m}}\), are auxiliary decision variables necessary for the efficient computation of the CVaR by the solver.
At the optimal solution, the \(\alpha\) value obtained corresponds to the VaR\textsubscript{$\beta$}, i.e., the share of \(\beta\) samples that have a higher balance value than the \(\alpha\).

This model computes the installed capacity of each power plant, forming a portfolio that is able to supply system demand fully at the desired level of risk.
However, as long as the share of each power plant is kept equal, the portfolio characteristics are preserved, and it can be applied to any installed capacity level.

\hypertarget{input-data-and-assumptions}{%
\subsection{Input data and assumptions}\label{input-data-and-assumptions}}

In order to assess the models described in the previous sections, we performed a case study based on data of potential Brazilian VRES locations and electricity demand. We assumed a copperplate system, i.e., there are no transmission constraints.
Furthermore, we did not include limits on the total share or installed capacity of each power plant or technology, even though this kind of constraint could be easily implemented. As the goal in this study is to evaluate the impact of the proposed model improvements rather than to obtain a realistic and detailed portfolio for the Brazilian grid, we chose to not use those constraints. That way, our results are easier to compare and understand.

\hypertarget{time-series-for-generation-and-load}{%
\subsubsection{Time series for generation and load}\label{time-series-for-generation-and-load}}

The generation data used in the case study was obtained based on the simulation of wind power generation \parencite{gruberAssessingGlobalWind2019} and PV power generation \parencite{ramirezcamargoSimulationMultiannualTime2020} from the latest generation of reanalysis climate data sets. From an initial data set consisting of 168 locations for PV and 415 for wind plants, we first eliminated locations that had similar time series.
For that purpose, we identified pairs of locations that had correlation greater than 99\%. From each pair, we kept only one of them. This process was repeated until correlation between remaining plants was always below or equal to the threshold. This allowed to reduce the dimension of the optimization problem and avoid redundant plants. After that process, 104 PV plants and 55 wind plants remained as candidates in the case study.

Demand time series historical data is available on the website of the Brazilian independent system operator (ONS) \parencite{onsCurvaCargaHoraria2019}. In order to eliminate the growth component from the time series, we normalized each time step by the maximum value in neighboring days.
All data, demand and outputs for each location, have hourly time steps and cover the period from January 1999 to August 2017, totaling to 163,611 hours.

Figure \ref{fig:PlantsCorrel} shows how Pearson correlation values are distributed among each type of power plant. Correlations among photovoltaic power plants are always high, ranging from 0.60 to 0.99.
On the other hand, considering only wind power plants, pairwise Pearson correlations vary greatly, ranging from -0.05 to 0.99.
Most wind power plants are negatively correlated to PV and, when correlation is positive, its absolute value is low, which shows that the combination of both technologies is advantageous to achieve portfolios with a low standard deviation.
Finally, PV is, in general, positively correlated to load, although at a low level.
This can be explained by the fact that the observed peak load in Brazil, most of the time, occurs during the afternoon due to the increased demand for air conditioning systems during this time of the day.

\begin{figure}
\centering
\includegraphics{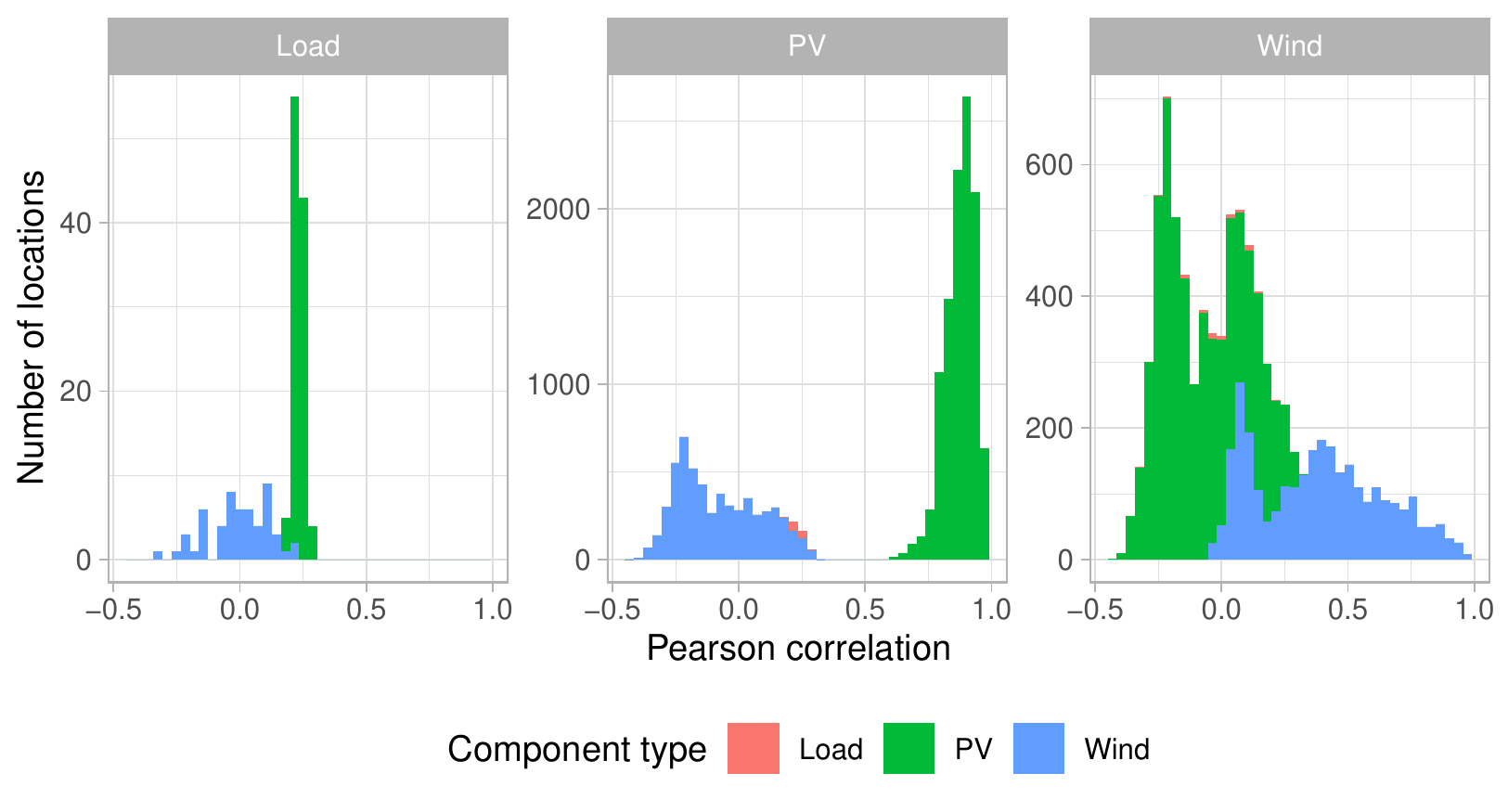}
\caption{\label{fig:PlantsCorrel}Histogram plot which shows the distribution of pairwise Pearson correlations.  Each facet shows the plant types separately: load (left), photovoltaics (center) and wind power plants (right).}
\end{figure}

Figure \ref{fig:distXcor} shows how correlation among power plants based on the same technology is related to the distance between them. There is an inverse relation between distance and correlation, when fixing one technology and this relation is more pronounced for wind plants. That agrees with the expectations as, even though PV generation is affected by different weather conditions, the radiation daily cycle only changes slightly based on the latitude and, mainly, longitude.

\begin{figure}[p!]
\centering
\includegraphics{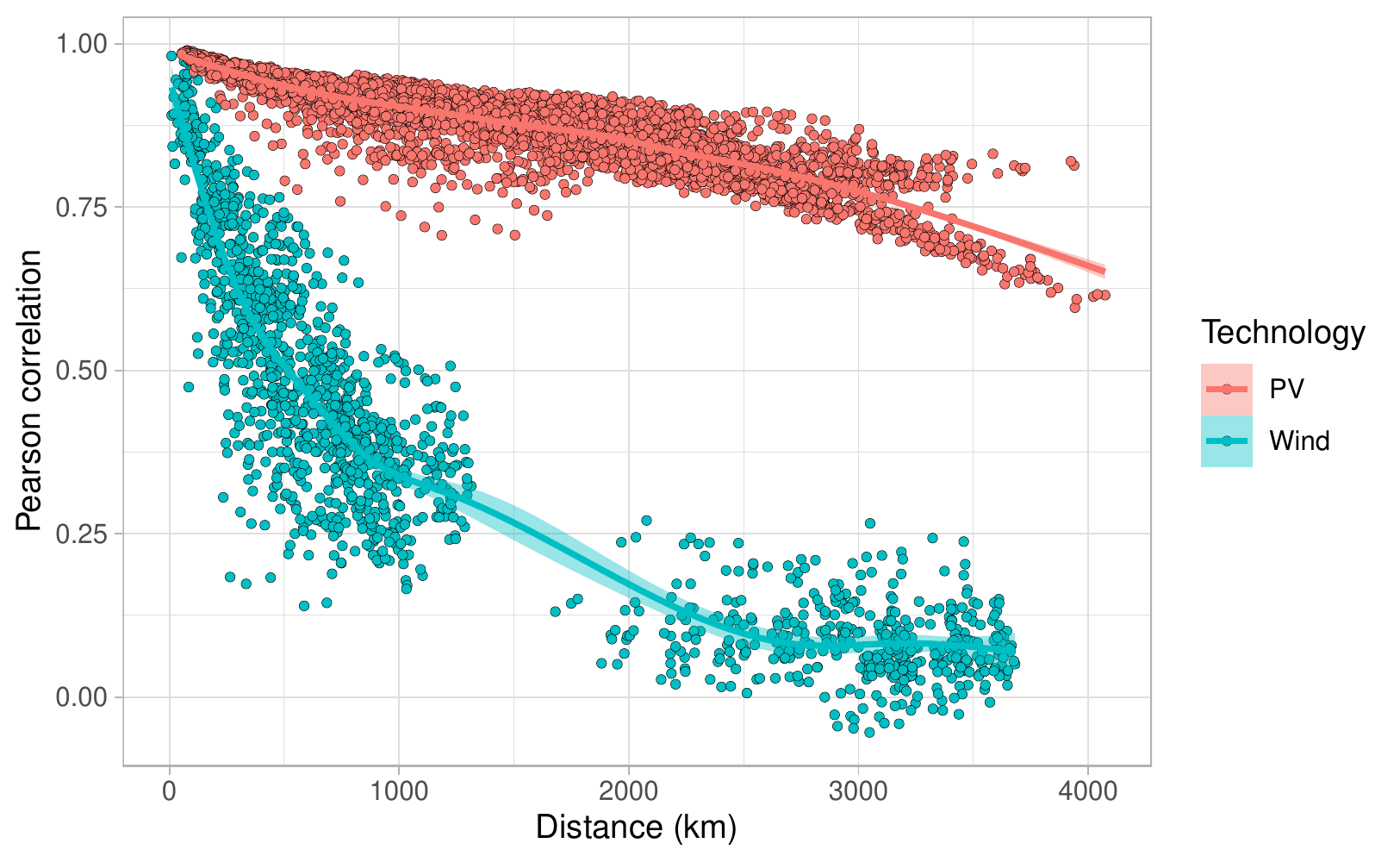}
\caption{\label{fig:distXcor}Scatterplot of the pairwise Pearson correlation of all locations of the same technology versus the geographic distance in kilometers}
\end{figure}

\hypertarget{costs}{%
\subsubsection{Costs}\label{costs}}

The power plant costs were estimated based on input data for the Brazilian Decennial Energy Plan -- PDE 2029 \parencite{epeCustoMarginalExpansao2019}. Table \ref{tab:PlantCosts} shows those values in Brazilian Reais (BRL)\footnote{In the year PDE 2029 was released, the average exchange rate was 3.88 BRL/USD.}.
Individual power plant costs can vary due to particular characteristics such as location, terrain steepness and existing infrastructure.
In this case, it would be possible to attribute individual investment and O\&M costs to each plant.
However, as a simplifying assumption, we used the same investment and O\&M costs for plants of the same technology, according to Table \ref{tab:PlantCosts}.

Based on investment and O\&M costs, we calculated the levelized costs of each plant, in \$/MWh. Therefore plants of the same technology with higher capacity factor will have lower costs per unit of energy. As the difference in the levelized cost of those technologies is relatively small, we performed a sensitivity analysis in which PV costs are significantly lower: half of the costs shown in Table \ref{tab:PlantCosts}.

\begin{longtable}[]{@{}lrr@{}}
\caption{\label{tab:PlantCosts} Power plant costs. Source: \textcite{epeCustoMarginalExpansao2019}.}\tabularnewline
\toprule
Type & Investment cost (BRL/kW) & O\&M costs (BRL/kW.y) \\
\midrule
\endfirsthead
\toprule
Type & Investment cost (BRL/kW) & O\&M costs (BRL/kW.y) \\
\midrule
\endhead
Wind & 4800 & 90 \\
PV & 3500 & 50 \\
\bottomrule
\end{longtable}

\hypertarget{models-summary}{%
\subsection{Scenarios}\label{models-summary}}

We run different scenarios -- in each scenario, a different formulation of the optimization model will be used. Each scenario has its own short name, as shown in Table \ref{tab:ModelDescription}.
The first character of the code refers to the combination of objective function and constraint used and the second character flags whether the employed demand curve is assumed constant or based on real, observed demand data.

\begin{longtable}[]{@{}
  >{\raggedright\arraybackslash}p{(\columnwidth - 8\tabcolsep) * \real{0.19}}
  >{\raggedright\arraybackslash}p{(\columnwidth - 8\tabcolsep) * \real{0.25}}
  >{\raggedright\arraybackslash}p{(\columnwidth - 8\tabcolsep) * \real{0.15}}
  >{\raggedright\arraybackslash}p{(\columnwidth - 8\tabcolsep) * \real{0.20}}
  >{\raggedright\arraybackslash}p{(\columnwidth - 8\tabcolsep) * \real{0.19}}@{}}
\caption{\label{tab:ModelDescription} Models description.}\tabularnewline
\toprule
Model & Description & Objective function & Fixed constraint & Load curve \\
\midrule
\endfirsthead
\toprule
Model & Description & Objective function & Fixed constraint & Load curve \\
\midrule
\endhead
Trad\_Flat & Traditional model. & Maximize generation & Portfolio capacity & Flat demand or nonexistent \\
Trad\_Obs & Trad\_Flat with observed demand profile. & Maximize generation & Portfolio capacity & Observed demand \\
Cost\_Flat & Technologies costs. & Minimize cost & Portfolio generation & Flat demand or nonexistent \\
Cost\_Obs & Cost\_Flat with observed demand profile. & Minimize cost & Portfolio generation & Observed demand \\
Cost\_Flat\_lcpv & Cost\_Flat with low-cost photovoltaics & Minimize cost & Portfolio generation & Flat demand or nonexistent \\
CVaR\_Flat & Energy balance at fixed risk level. & Minimize cost & Average energy balance of lowest 5\% time steps & Flat demand or nonexistent \\
CVaR\_Obs & CVaR\_Flat with observed demand profile. & Minimize cost & Average energy balance of lowest 5\% time steps & Observed demand \\
\bottomrule
\end{longtable}

In order to find the upper and lower bound for the efficient frontier, a method similar to the one used by \textcite{huGeographicalOptimizationVariable2019} was implemented. We first executed the model removing the constraint that limits the standard deviation, to obtain the portfolio with the highest variance.
Then, the variance was minimized, in order to find the lowest variance portfolio. We subsequently executed the optimization 51 times, in each run changing the maximum value of the standard deviation (\(\sigma_P\)) by equally distanced steps, from a lower to a higher value. To run these models, we used the CPLEX solver.
In order to prepare the data and to analyze optimization results, the software R \parencite{rcoreteamLanguageEnvironmentStatistical2020} were used. The code and data files will be made available upon publication at github.com/tuberculo/VRES-Portfolio.

\hypertarget{diversification}{%
\subsection{Diversification}\label{diversification}}

One important aspect is to verify the level of diversity of the portfolios in the efficient frontier in the different model formulations. For that purpose, we will use three different indices: the mean geographic distance, weighted by plant generation; the mean Euclidean distance, also weighted by generation share; and the Herfindahl-Hirschman Index (HHI), as in \textcite{shahriariCapacityValueOptimal2018}.

\hypertarget{mean-geographical-distance}{%
\subsubsection{Mean geographical distance}\label{mean-geographical-distance}}

Based on the location of the plants and their relative share, it is possible to calculate the mean distance for each portfolio, as shown in Equation \eqref{eq:GeoDiv}.
\(D_{ij}\) is the shortest distance between plants \(i\) and \(j\), in kilometers, and \(w_i\) is the generation share\footnote{Plant generation over total portfolio generation.} of power plant \(i\).
Therefore, higher values for this index show that the plants are, on average, more apart from each other.

\begin{equation}
GD = \sum_{i = 1}^N \sum_{j = 1}^N w_i w_j D_{ij} 
\label{eq:GeoDiv} 
\end{equation}

\hypertarget{mean-non-geographical-distance}{%
\subsubsection{Mean Euclidean distance}\label{mean-non-geographical-distance}}

It is also possible to use the Euclidean distance (Equation \eqref{eq:EucDiv}) of the generation profile, instead of the geographical distance, in order to calculate a diversity index.
In this case the distance (\(E_{ij}\)) is the sum of the square of the differences of two power plants generation at each time step, as described in equation \eqref{eq:DefDist}.
The higher the value for this index, the higher the diversification.

\begin{equation}
E_{ij} = \sqrt{\sum_{t = 1}^{T} (i_t - j_t)^2}
\label{eq:DefDist} 
\end{equation}

\begin{equation}
ED = \sum_{i = 1}^N \sum_{j = 1}^N w_i w_j E_{ij} 
\label{eq:EucDiv} 
\end{equation}

Compared to HHI (presented below), this index has the advantage to measure one additional property of diversity: Disparity (degree of differences)\footnote{The other properties are Balance (evenness in contributions) and Variety (number of elements in mix) \parencite{awerbuchAnalyticalMethodsEnergy2008}.}, i.e., it does not only measure how many elements there are in the set and their relative weights, it does also measure how different each element is from the others.

\hypertarget{hhi}{%
\subsubsection{HHI}\label{hhi}}

The Herfindahl-Hirschman Index (HHI) is the sum of the squared share of each component (equation \eqref{eq:HHI}). Therefore, a HHI equal to 1 means that the portfolio contains only one plant. Lower HHI values indicate high diversification.

\begin{equation}
HHI = \sum_{i = 1}^N w_i^2
\label{eq:HHI}
\end{equation}

It is easy to show that the minimum HHI value occurs when all plant's weights are equally distributed and that minimum value is \(1/N\), where \(N\) is the total number of elements.

This index yields lower values for highly diversified sets, differently from the previous indexes. We will therefore use the inverse of the HHI to ensure comparability. This transformation has the additional advantage of showing the effective number of power plants in the portfolio \parencite{jostEntropyDiversity2006}.
For instance, a portfolio which HHI is equal to 0.1 has a diversity equivalent to a portfolio of 10 equally weighted plants.

\hypertarget{results-and-discussion}{%
\section{Results and discussion}\label{results-and-discussion}}

\hypertarget{methodologies-improvements}{%
\subsection{Impacts of model adaptations}\label{methodologies-improvements}}

In this section, we analyze the impact of each proposed improvement on model results, highlighting the main differences in results obtained from each formulation. As the models have different fixed constraints, i.e., they have different sizing in terms of installed capacities, mean generation and other parameters, comparing results directly could be misleading. Therefore, in order to compare them more easily, standard deviation and costs are normalized.
We consequently divided the value of each portfolio's power output standard deviation by the portfolio installed capacity in order to obtain the normalized standard deviation. This is equivalent to a comparison of all portfolios with the same installed capacity.
In the same way, portfolio equivalent costs represents the weighted average of the plants LCOE or, equivalently, the total costs divided by the total energy produced in \$/MWh.

\hypertarget{improvement-1-costs}{%
\subsubsection{Improvement 1: cost minimization}\label{improvement-1-costs}}

The resulting efficient frontiers from the reference model (\textbf{Trad\_Flat}) and the model that minimizes cost (\textbf{Cost\_Flat}) are shown in figures \ref{fig:CompareFrontierCost} and \ref{fig:CompareFrontierCost2}.

As a reference of how portfolios in the frontier compare to individual plants, dots representing capacity factor, standard deviation and levelized cost of each individual candidate plant were added.
The curve for the \textbf{Cost\_Flat} model input parameters sensitivity (\textbf{Cost\_Flat\_lcpv}), which has lower costs for PV plants are shown only in Figure \ref{fig:CompareFrontierCost} because, as plant costs are not equal between optimization runs, their portfolios costs are not comparable.
The point where the coefficient of variation (CV) in \textbf{Trad\_Flat} model is minimal, is highlighted in both figures.
As discussed in Section \ref{improvement-1-cost-minimization} and in Appendix \ref{proof}, there is no portfolio with lower standard deviation than that in the \textbf{Cost\_Flat} model.

\begin{figure}
\includegraphics[width=0.5\linewidth]{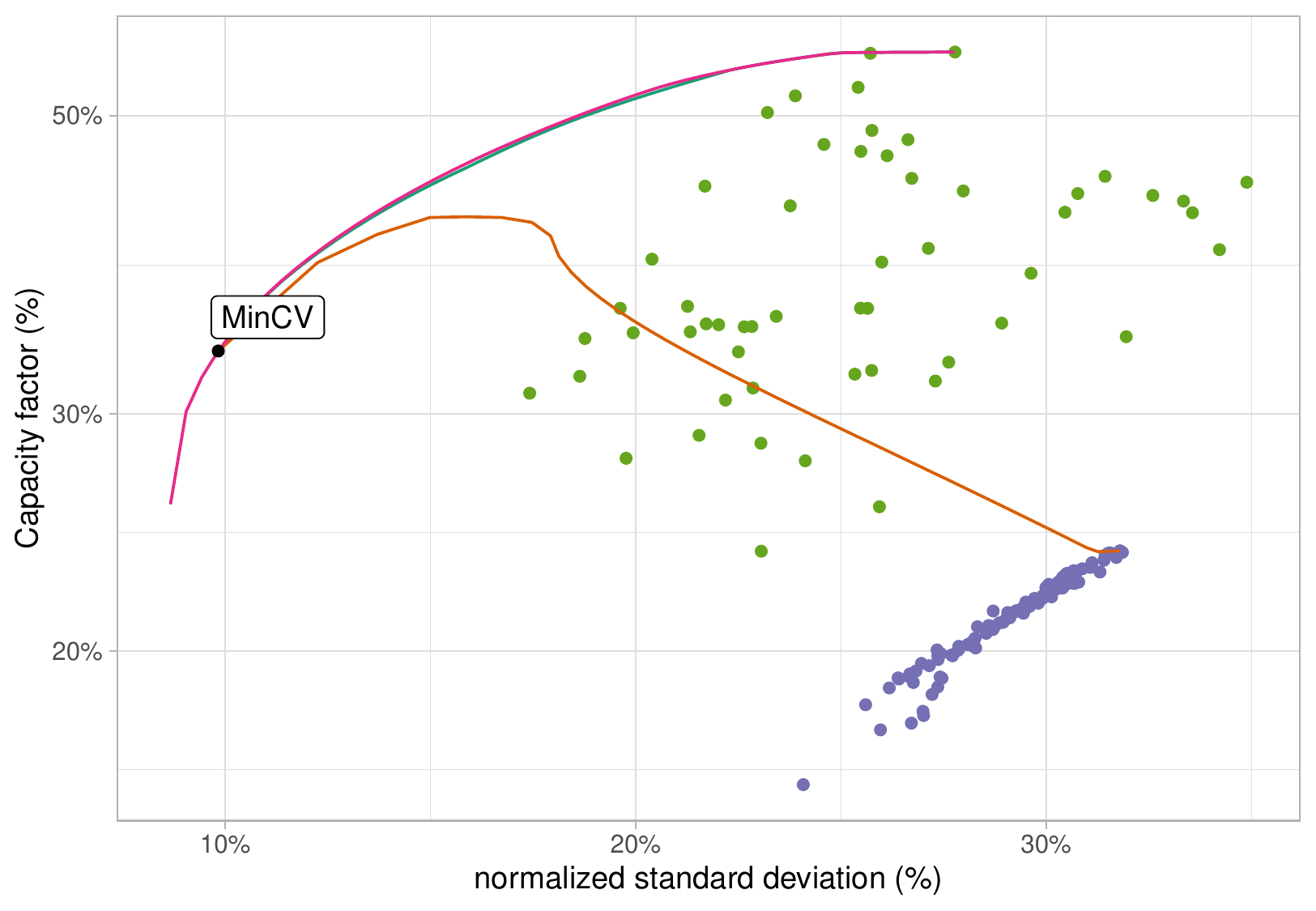} \includegraphics[width=0.5\linewidth]{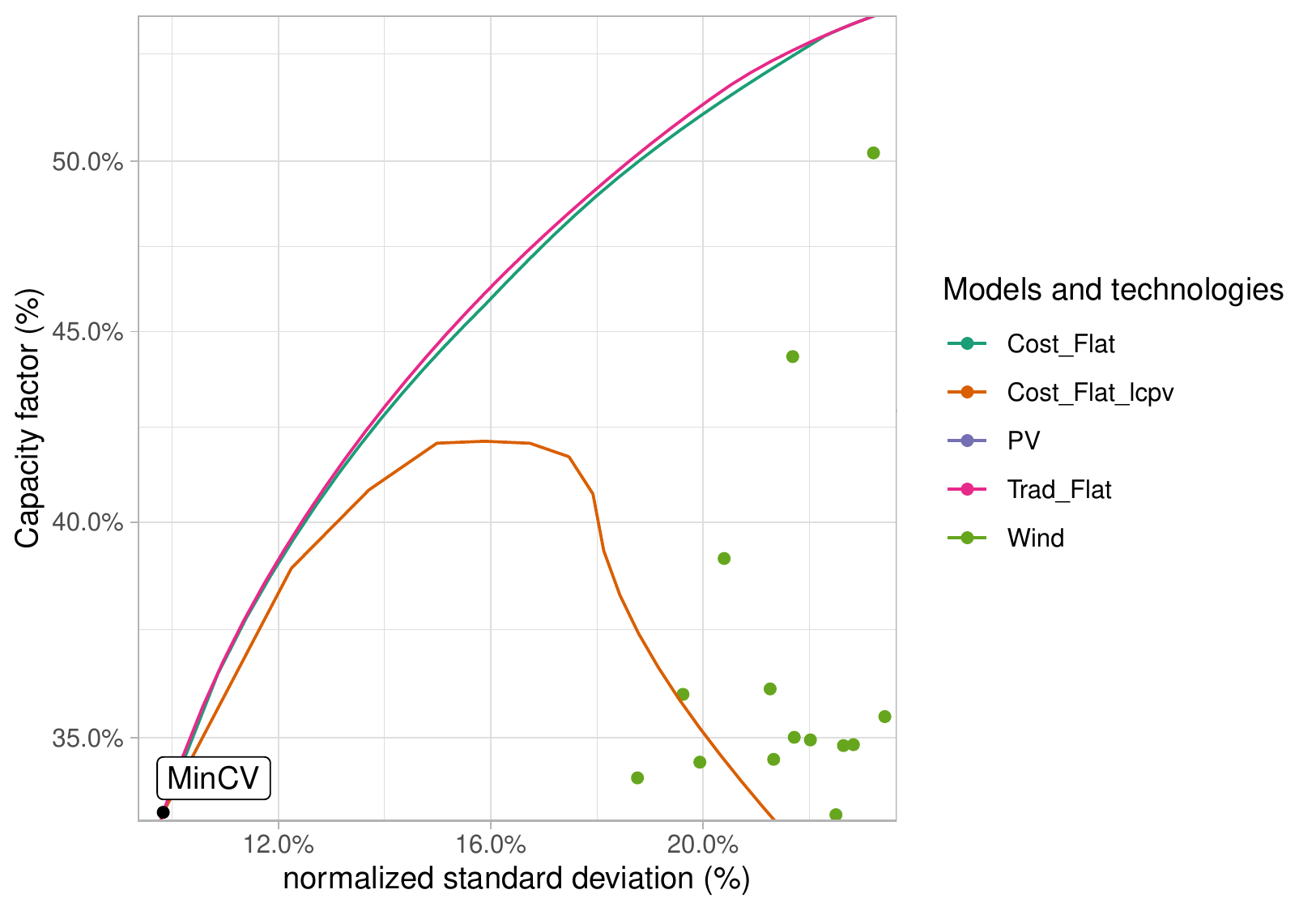} \caption{Efficient frontier by capacity factor. On the right, a zoomed version highlighting the region where the curves from \textbf{Trad\_Flat} and \textbf{Cost\_Flat} model differ.}\label{fig:CompareFrontierCost}
\end{figure}

\begin{figure}
\includegraphics[width=0.5\linewidth]{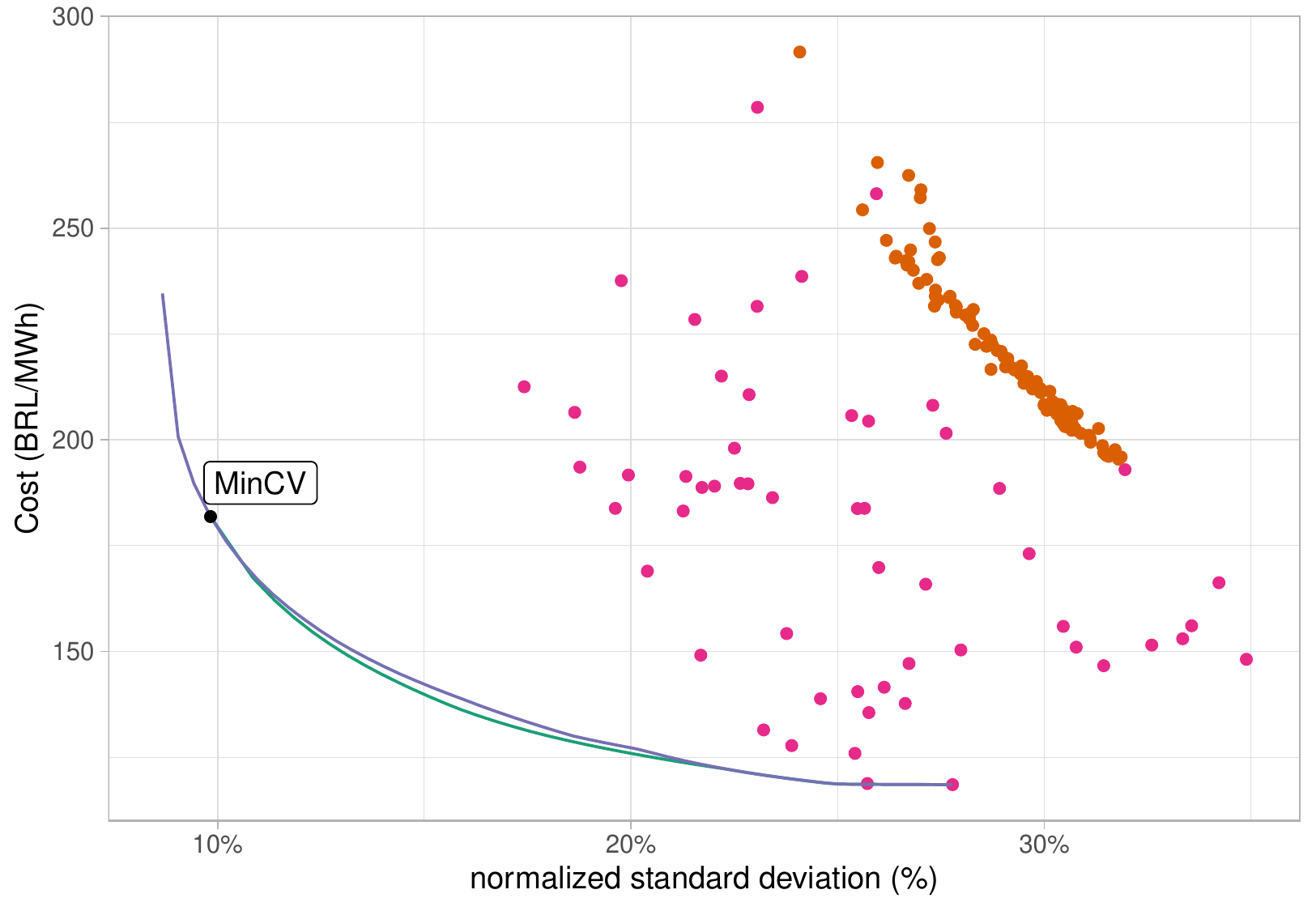} \includegraphics[width=0.5\linewidth]{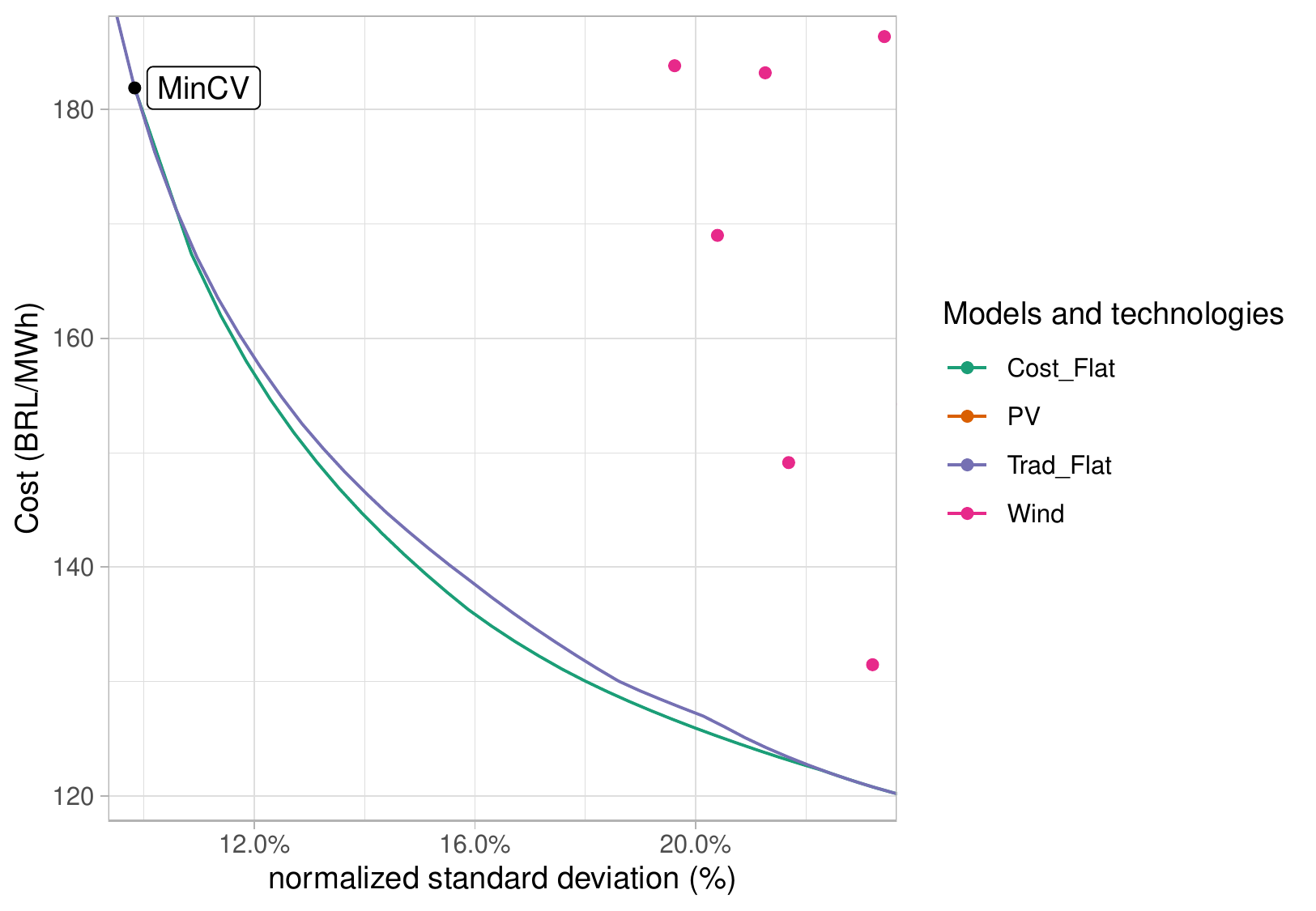} \caption{Efficient frontier by portfolio levelized cost. On the right, a zoomed version highlighting the region where the curves from \textbf{Trad\_Flat} and \textbf{Cost\_Flat} model differ.}\label{fig:CompareFrontierCost2}
\end{figure}

As expected, the results show that portfolio costs from \textbf{Cost\_Flat} model are lower or equal to the ones obtained in \textbf{Trad\_Flat} model and \textbf{Trad\_Flat} portfolio capacity factors are higher or equal to the ones in \textbf{Cost\_Flat} and \textbf{Cost\_Flat\_lcpv} models.
Nevertheless, as costs for both technologies are similar, the results do not differ much between the \textbf{Trad\_Flat} and the \textbf{Cost\_Flat} model.
However, in \textbf{Cost\_Flat\_lcpv}, the results are different for all levels of standard deviation, except in the point where SD is minimized, because standard deviation does not depend on plant costs.

To analyze how the portfolio composition changes along the curve in each model, Figure \ref{fig:CompareRunsSidebySide} shows the relative share of the power plants that compose a portfolio.
As there are too many plants in the data set, we aggregated them by similar capacity factors to simplify visualization.
In comparison to the reference model (\textbf{Trad\_Flat}), the PV share increases slightly when minimizing costs using standard costs (\textbf{Cost\_Flat}) because they have lower investment costs than wind plants.
When PV costs are reduced (\textbf{Cost\_Flat\_lcpv}), portfolio composition changes more intensely and
PV plants tend to substitute wind plants as SD increases.

\begin{figure}
\centering
\includegraphics{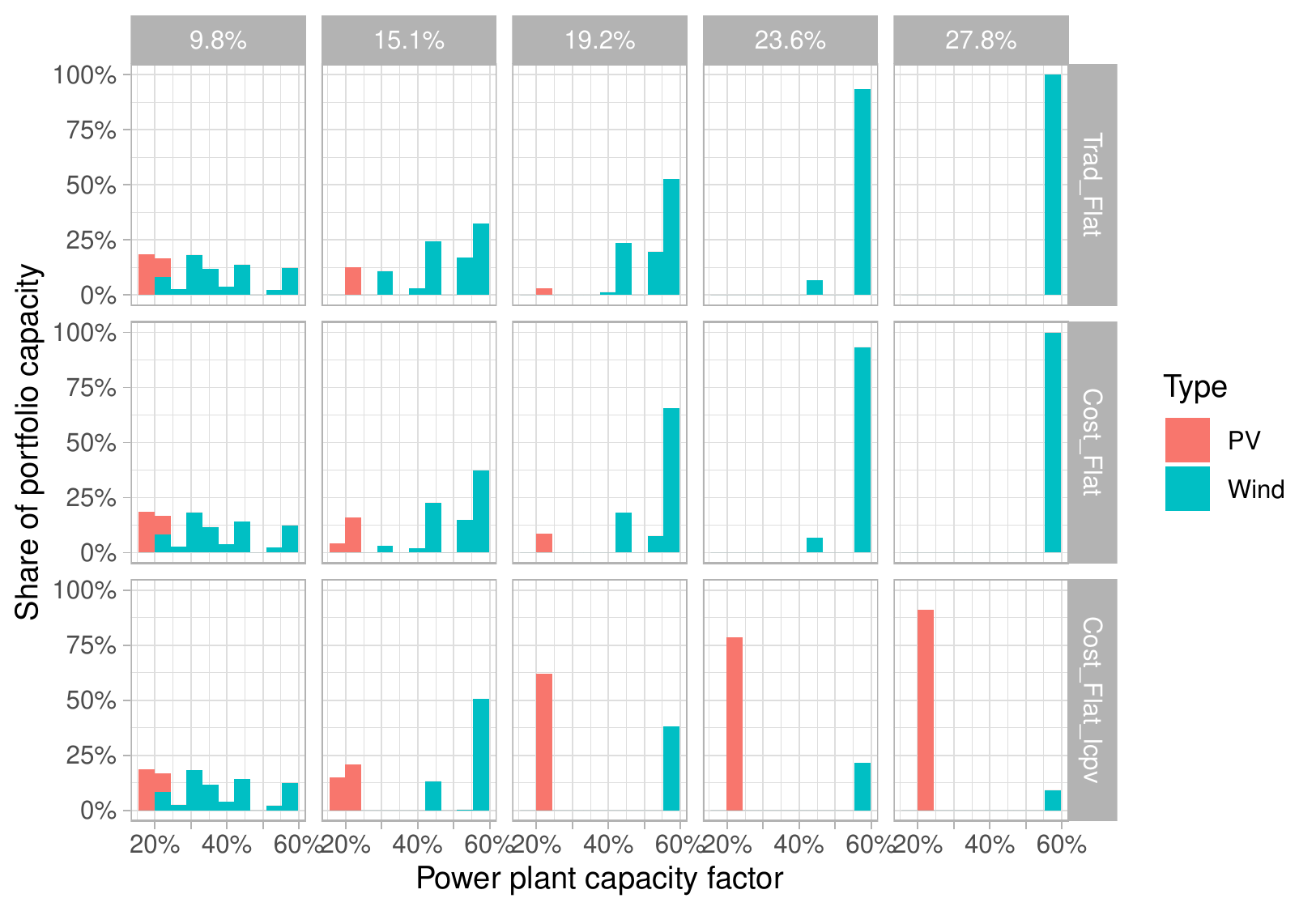}
\caption{\label{fig:CompareRunsSidebySide}Comparison of portfolio composition. Each facet is a portfolio and the height of the bars is the share of the capacity of plants in that range of capacity factor. Therefore, the sum of the bars in a facet is always 1. Different models are in different rows and the values in the label of each column is the rounded value of portfolio standard deviation.}
\end{figure}

In summary, changing the objective function from maximizing capacity factor to minimizing costs improves the results by incorporating information relevant to the decision maker. It also has the advantage of discarding portfolios with too low capacity factor (LowSD\_HighCV portfolios). Those portfolios, counterintuitively, have a higher resulting standard deviation when compared to other portfolios with the same expected generation and higher capacity factor. As this formulation change doesn't add any complexity to the computational problem, there is no disadvantage of using it instead of model \textbf{Trad\_Flat} .

\hypertarget{improvement-2-demand-profile}{%
\subsubsection{Improvement 2: demand profile}\label{improvement-2-demand-profile}}

As shown in Section \ref{time-series-for-generation-and-load}, Brazilian aggregated demand has, for most of the plants, higher correlation to PV output profile than to wind power output profile.
Consequently, incorporating demand into the model, as described in Section \ref{improvement-2-demand-correlation}, has the effect of increasing the relative share of PV plants in portfolios obtained in models \textbf{Trad}, \textbf{Cost} and \textbf{CVaR} using observed demand when compared to their equivalent models using flat demand (Figure \ref{fig:PVShareDemand}).
However, this increase does not occur in portfolios with high standard deviation, as they already have a tendency to have low PV shares.

\begin{figure}
\centering
\includegraphics{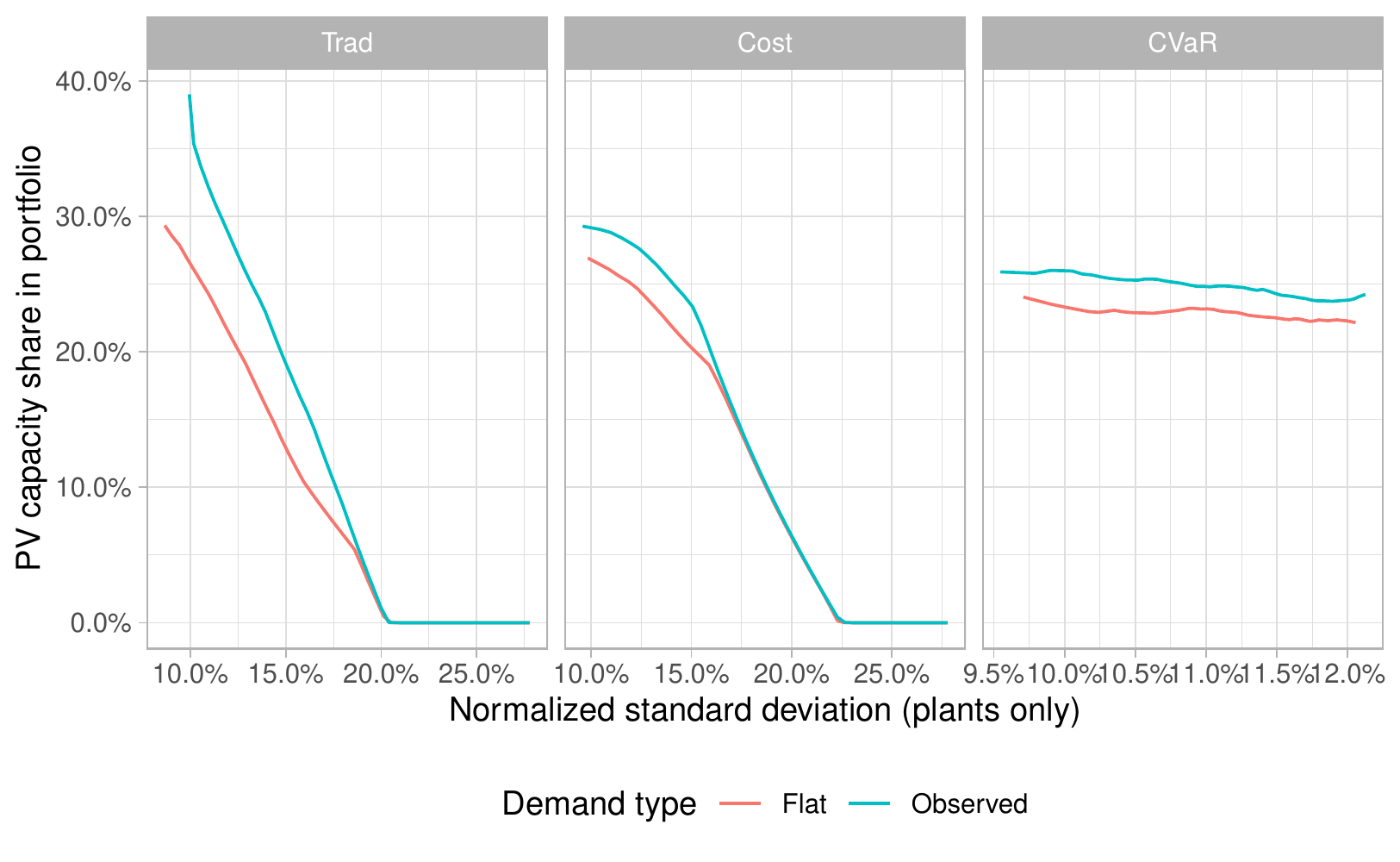}
\caption{\label{fig:PVShareDemand}PV share in portfolios for each different model formulation.}
\end{figure}

\hypertarget{improvement-3-supply-risk}{%
\subsubsection{Improvement 3: supply risk}\label{improvement-3-supply-risk-results}}

In Figure \ref{fig:FrontierRisk}, the results from \textbf{CVaR\_Flat} model --- the one which adds a lower threshold for the energy balance of the worst performing time steps --- are compared to the results from the previous models.
The first important finding is that the curve obtained in the \textbf{CVaR\_Flat} model does not overlap in any point with the curves from the \textbf{Trad\_Flat} or the \textbf{Cost\_Flat} model.
Portfolios resulting from the \textbf{CVaR\_Flat} model have a lower mean capacity factor (and higher cost) for the same standard deviation.
Nevertheless, they are still preferred because they have higher values at the lower percentiles of the output distribution when compared to the portfolios of the other two models. This implies that no portfolio in the efficient frontier obtained from the traditional MPT formulation is optimal in terms of ensuring a firm output level. Therefore, if the goal is to maximize firm output, then applying MPT and selecting the best portfolios will probably not result in obtaining the best portfolios in terms of firm output.

Another outstanding characteristic in Figure \ref{fig:FrontierRisk} is the length of the curve from the \textbf{CVaR\_Flat} model.
It is, in comparison to the other models, much shorter.
There are no portfolios with high standard deviation in the \textbf{CVaR\_Flat} model. We conclude that portfolios with high SD values, by having a more wide output distribution, would require an increase in the total installed capacity, elevating the mean costs above the costs from lower SD portfolios.

\begin{figure}
\centering
\includegraphics{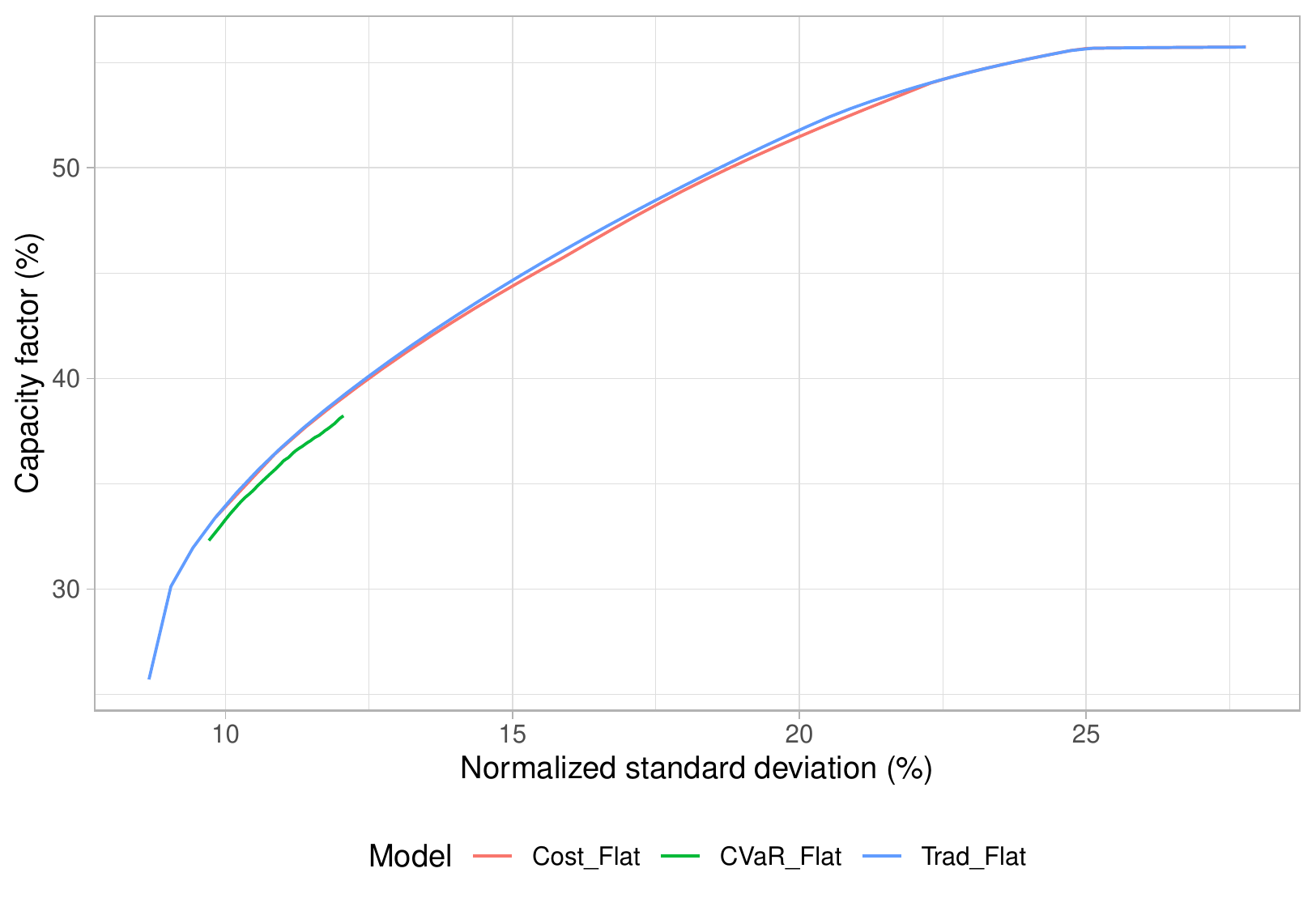}
\caption{\label{fig:FrontierRisk}Efficient frontier by capacity factor, including \textbf{Trad\_Flat}, \textbf{Cost\_Flat} and \textbf{CVaR\_Flat} models.}
\end{figure}

The downside of this formulation is the higher computational time required to execute it when compared to the \textbf{Trad\_Flat} and \textbf{Cost\_Flat} models.
In our runs, we observed an increase in execution time of around 50 times. That ratio can vary based on system configuration and parameters used, such as the number of potential plants and number of samples used to determine CVaR.

\hypertarget{diversification-1}{%
\subsection{Diversification}\label{diversification-1}}

In the \textbf{Trad} and \textbf{Cost} models the highest variance portfolio is composed of only one plant as there is no constraint limiting the share of each technology or location: it is the one with the highest capacity factor.
When different plants are included in a portfolio, SD reduces at the cost of lower capacity factors and higher LCOE. Thus portfolios with higher standard deviation tend to have a smaller diversity and vice versa. On the other hand, when CVaR constraints are used (models \textbf{CVaR\_Flat} and \textbf{CVaR\_Obs}), all resulting portfolios are highly diversified (Figure \ref{fig:Diversification}).
Hence, in order to have low generation costs and a firm output level, diversification plays an important role.

For the \textbf{CVaR} model, the inverse of HHI decreases as SD increases, indicating a decrease in diversity. However, the other metrics, ED and GD, remain at a high level for every SD.
Therefore, even though higher SD concentrates output share in some plants, those plants have a very distinct output profile, which explains high ED level, and are far from each other, which explains high GD level.

The mean geographical distance among power plants, for all portfolios in the \textbf{CVaR\_Flat} and \textbf{CVaR\_Obs} models, is higher than 1600 km. This is an indication of how integrating different regions located far from each other is an advantage when the goal is to allow for a steadier generation profile at lower costs.
The combination of power plants located in places with different climate conditions, increases the chance that the combined generation is at least equal to demand.
Likewise, the spreading of PV power plants from East to West has the effect of maintaining the output from this source available for more hours during a day.
Naturally, as this case study was performed based on a copperplate assumption and without constraints on the size of each plant, it is possible that those portfolios would not be feasible in more realistic conditions. Nevertheless, those results show how important geographical spreading is to smooth output profile.

\begin{figure}
\centering
\includegraphics{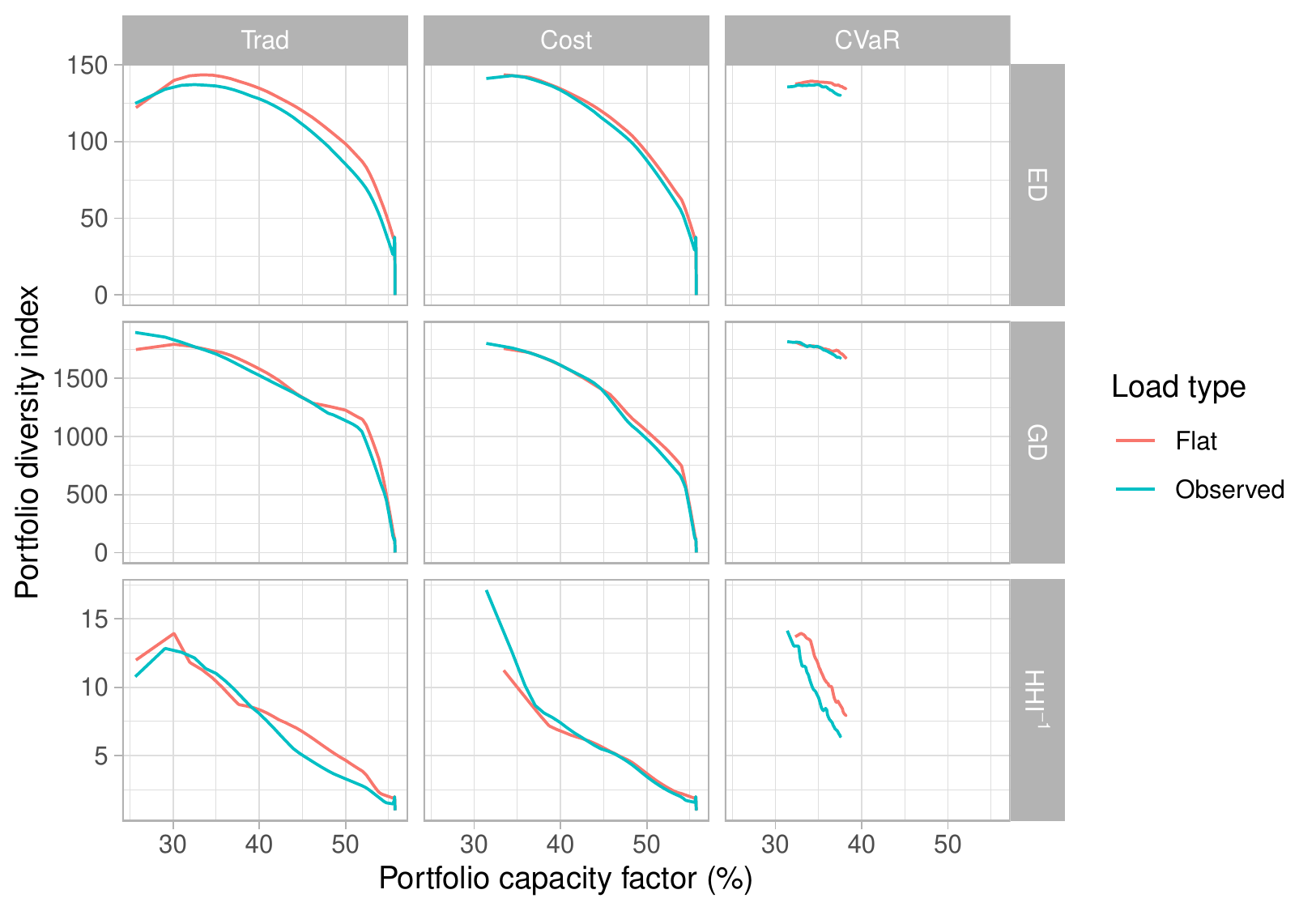}
\caption{\label{fig:Diversification}Portfolio diversification for each methodology. Each facet column is a model type. Each facet row is a diversification index. ED (Euclidean distance) is dimensionless, GD (Geographical Distance) is in kilometers and HHI (Herfindahl-Hirschman Index) is also dimensionless. For ED, GD and inverse of HHI, higher values mean more diversification.}
\end{figure}

In this context, wind power expansion in Brazil is relatively low diversified, concentrating around 90\% of installed capacity in Northeast region, in few states. The share of wind power plant in Northeast region in all portfolios of \textbf{CVaR\_Flat} and \textbf{CVaR\_Obs} scenarios is much lower, around 55\%. Therefore, exploring more the wind potential of other regions in Brazil could improve the firm energy level of the system.

\hypertarget{validation-same-risk-level-comparison}{%
\subsection{Same risk level comparison}\label{validation-same-risk-level-comparison}}

Until now, all comparisons among different models were performed by normalizing all portfolios to have the same total installed capacity.
In this section, a different approach is used. Here, we compare portfolios based on necessary installed capacity to attend the risk criteria used in models \textbf{CVaR\_Flat} and \textbf{CVaR\_Obs}.
That is, keeping the portfolio proportion of each plant unchanged, we find the lowest installed capacity that satisfies the following constraint: mean energy balance at the 5\% lowest values is equal to 0.
In this analysis, risk levels were calculated for the whole time series, not only for the 3000 samples used in the portfolio optimization problem.
For this reason, for portfolios obtained from the \textbf{CVaR\_Flat} and \textbf{CVaR\_Obs} models, the installed capacity values are not equal to the ones found in the portfolio optimization, they are, however, quite similar (less than 2\% difference in all cases).

With those calculated installed capacity values, every portfolio is equally capable of attending demand at the required risk level. Nevertheless, some of them would be more expensive than others.
Figure \ref{fig:IsoRiskMultiplier} shows the mean cost per megawatt-hour of demand for the different portfolios.
As expected, \textbf{CVaR} models are indeed the models that provide the portfolios with lowest costs.
The other models have similar costs in a narrow region of the curve, beginning at the minimum CV point the in \textbf{Trad} model and at the minimum SD point in the \textbf{Cost} model.
However, outside this region, system cost increases quickly and reaches values as high as approximately 3.5 times the lowest cost.

In the curve of the \textbf{Trad} model, some  portfolios are clearly dominated, they under-perform both in terms of standard deviation and cost. These are the \textbf{LowSD\_HighCV} portfolios, as discussed in Section \ref{improvement-1-cost-minimization}.
The expected capacity factor from those portfolios is low and to supply demand at the risk criteria used, more installed capacity is needed, causing the SD and costs to increase more than for other portfolios.

When the demand profile is considered, the results are similar except that the \textbf{Trad\_Obs} model performs worse than others in the region with low standard deviation.
That can be explained by the fact that the relative weight of \textbf{DemandGen} is smaller in this case, as discussed in Section \ref{improvement-2-demand-correlation}.

\begin{figure}
\centering
\includegraphics{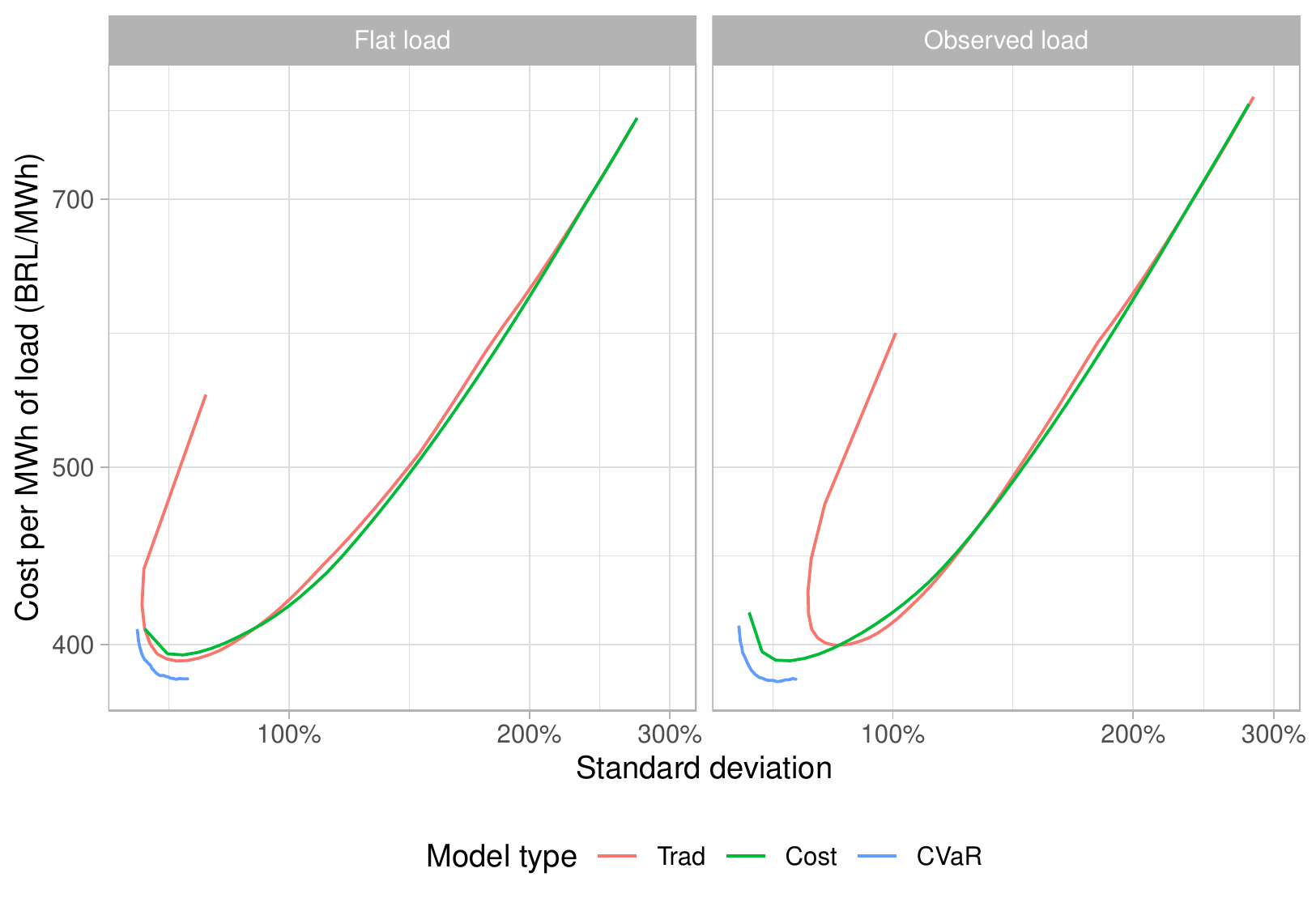}
\caption{\label{fig:IsoRiskMultiplier}Each curve represents portfolio cost normalized by the necessary installed capacity to supply demand at the risk criteria used in the \textbf{CVaR\_Obs} model (balance mean of 5\% worse time steps is 0). Cost values in y-axis are per MWh of system demanded. The standard deviation values, in the x-axis, are normalized to the new installed capacity. For a better visualization, the plot area is limited at 800 BRL/MWh instead of the maximum value of 1374.40 BRL/MWh.}
\end{figure}

We selected three portfolios for each model to evaluate in a more detailed way how energy balance behaves in each portfolio normalized by CVaR risk criteria.
The portfolios selected were the ones with lowest, intermediate and highest standard deviation for each model that uses demand (\textbf{Trad\_Obs}, \textbf{Cost\_Obs} and \textbf{CVaR\_Obs}).
In all cases, the expected excess energy is more than 100\% of demand, i.e., mean output from portfolios is more than two times higher than mean demand (Figure \ref{fig:EnergyBalanceDistribution}).
This value could be reduced by using storage systems or by accepting a higher shortage probability (increasing \(\beta\) value) and providing more backup power plants.
Those possibilities are not assessed in this model and can be investigated in future research.

Portfolios with higher standard deviation require higher mean output to be able to achieve the required risk level.
However, the \textbf{Trad\_Obs} and \textbf{Cost\_Obs} models have a much higher increase in expected output with higher standard deviation.
This is because, as SD increases, portfolio output gets more dispersed, reducing the firm capacity factor from the portfolio. Consequently, more installed capacity is needed and mean output increases. 

\begin{figure}
\centering
\includegraphics{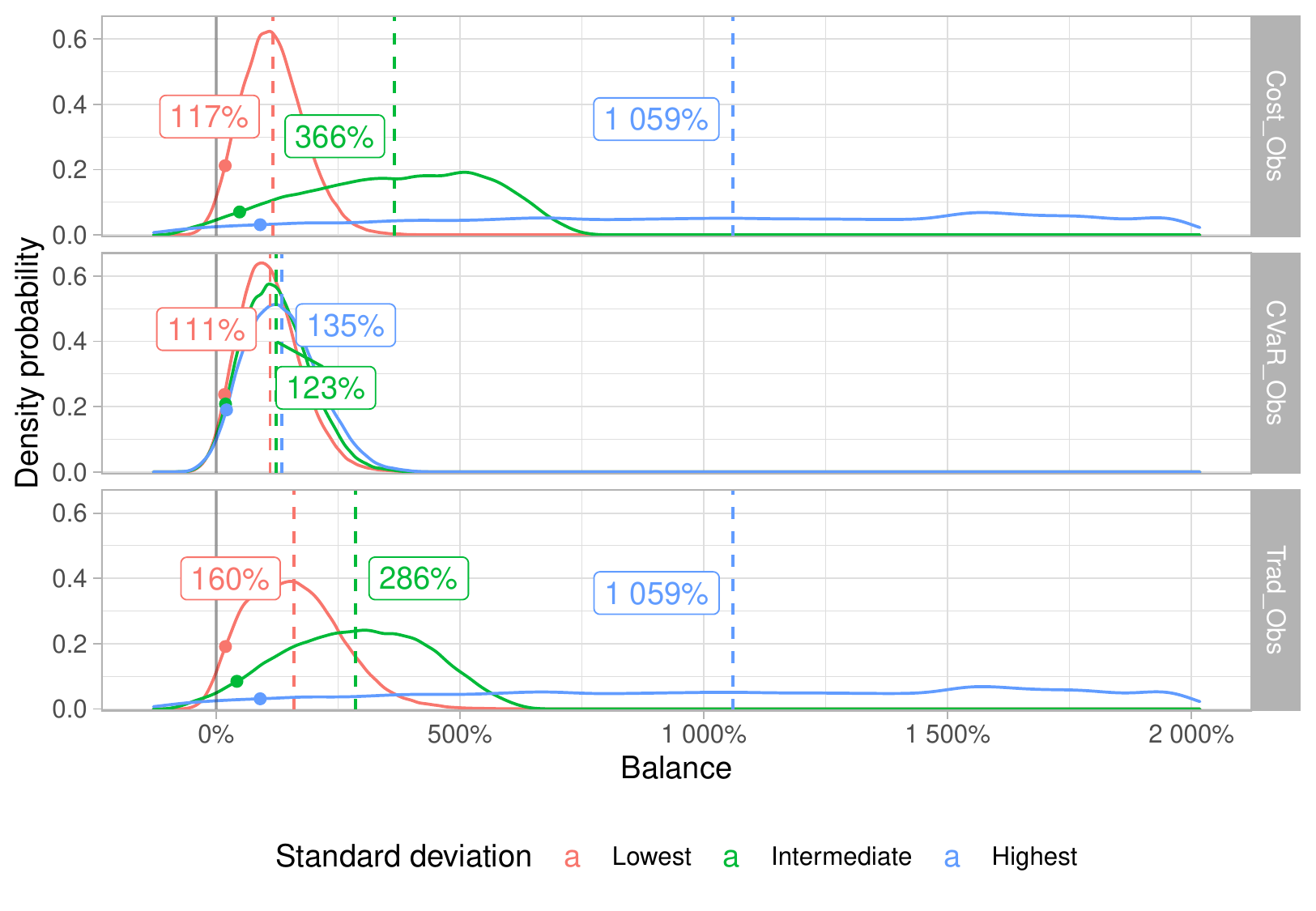}
\caption{\label{fig:EnergyBalanceDistribution}Distribution of energy balance over mean demand for three portfolios in models \textbf{Trad\_Obs}, \textbf{Cost\_Obs} and \textbf{CVaR\_Obs}. Balance values are in percent and are proportional to mean demand. Vertical dashed lines and labels show the expected value of the energy balance for each portfolio. The solid colored points are the VaR\textsubscript{5\%} of each portfolio.}
\end{figure}

\hypertarget{limitations-and-further-research}{%
\subsection{Limitations and further research}\label{limitations-and-further-research}}

In this work, there are some limitations that could be improved in further research.
 First, this study is based on VRES only. Thus, the interactions between VRES and other types of generation or storage are not explored. These additional technologies include hydro power, biomass, concentrating solar power and storage systems. The incorporation of those technologies and even fossil thermal power plants would help to evaluate the interaction between technologies and probably, portfolios with lower real-world costs could be found. However, in order to incorporate non-VRES technologies, some adaptations and simplifications are necessary because those technologies are, to some extent, controllable. Therefore, to calculate their variance, correlation and other necessary parameters, some strong assumptions would have to be made.

  The expected output from VRES power plants is based on historical climate profiles for each plant and depends on its location and technology. However, output patterns could change in the future, for instance, due to climate change.
  This kind of risk is not considered in the data we use.
  
  In this study, we used a single configuration for CVaR constraints in the \textbf{CVaR\_Flat} and \textbf{CVaR\_Obs} models, i.e., only one set of values was used for \(\beta\) and \(\omega\), and we maintained the minimization of SD as one of the objectives. This way, the comparison to the other models is straight-forward. However, different parametrizations could be used, depending on the desired goal. It is possible to jointly use several CVaR constraints, with different parameters \(\beta\) and \(\omega\) that would give more control over the desired shape of the energy balance distribution. Another possibility is to replace one of the optimization goals: instead of SD minimization, use CVaR minimization.

  The time series used in this is study have hourly resolution.
  Using more detailed data could better capture short-term variation in plants output and load.
  \textcite{shahriariCapacityValueOptimal2018} found that when different time resolutions are used, the optimal portfolio composition in MPT models change for the Eastern United States grid. Thus, a similar analysis considering the formulation proposed here would be valuable.

\hypertarget{conclusions}{%
\section{Conclusions}\label{conclusions}}

In this paper, we proposed and assessed several improvements to the application of MPT in renewable electricity systems. Those improvements incorporate three important aspects that were lacking in previous studies, notably: (i) focusing on minimizing  portfolio costs instead of maximizing capacity, (ii) considering correlation between load and generation profiles and (iii) limiting shortage risks via the inclusion of a CVaR measure.

We found out that incorporating plant costs in the formulation better captures important information, as investment decisions are not only based on the capacity factors of plants and the interaction among them (covariance), but also on their costs. Depending on the relative costs of candidate plants, the resulting efficient frontier can be completely different when this change is added to the formulation. The exception is the portfolio with lowest standard deviation. Additionally, a positive side effect of this proposed formulation is that portfolios which SD is lower than the SD of the portfolio with minimum CV value are removed from the resulting efficient frontier.
Those portfolios are not desirable because their reduced standard deviation is mostly explained by their reduced capacity factor.

We also proposed to include system demand in portfolios, represented as a power plant with negative generation. As expected, this change increases the share of plants that have relatively higher correlation with system demand.
It is worth noting that those changes in formulation do not have any downside in terms of computational complexity.

Another, more important, improvement proposed and assessed was changing the constraint that sets a fixed value for all portfolios: from fixed installed capacity or mean output to fixed CVaR at a given level of risk.
We found out that none of the portfolios in the efficient frontier from traditional MPT are in the optimal set after this change. Furthermore, all portfolios obtained have a low standard deviation when compared to the standard deviation range of the efficient frontier in the traditional MPT.
This shows that high standard deviation portfolios are not suited to provide firm energy output.
Therefore, the benefits obtained from this formulation justify its use even considering the increase in computational burden to solve the optimization model.

The portfolios obtained after all changes in formulation were performed, have in common a high diversification level, being composed by a combination of many different and spatially separated plants. This fact highlights how well diversified portfolios are important to ``smooth'' aggregated output and to guarantee a firm output level.

Many of the plants that are part of the optimal set are not among plants with lowest levelized costs.
If plants were built under LCOE assumptions only, they would not be built because there are cheaper options. However, those plants play an important role in the portfolio as they help to reduce risk, complementing other plants.
Therefore, developing ways to incentive the construction of those plants is important to better explore the resources of a region in an optimal way.

\hypertarget{acknowledgments}{%
\section*{Acknowledgments}\label{acknowledgments}}
\addcontentsline{toc}{section}{Acknowledgments}

This study was financed in part by the Coordenação de Aperfeiçoamento de Pessoal de Nível Superior - Brasil (CAPES) - Finance Code 001.
C.K. gratefully acknowledges financial support from the Anniversary Fund of the Oesterreichische Nationalbank (OeNB), 18306. J.S., C.K. and P.R. acknowledge  the financial support of the European  Research  Council  (‘reFUEL’  ERC-2017-STG  758149). A.P. acknowledges the support of CNPq and FAPERJ.

\section*{Competing interests}
No competing interests to declare.

\printbibliography

\hypertarget{appendix-appendix}{%
\appendix}

\numberwithin{equation}{section}
\hypertarget{proof}{%
\section{Proof}\label{proof}}

We compare two different formulations to achieve efficient frontiers. Both are based on a two objective optimization with a fixed parameter. The fixed parameter is installed capacity in the first model (Model C) and expected generation in the second one (Model G). A portfolio is non-dominated, i.e. belongs to the efficient frontier, if there is no other portfolio which is better in one of the parameters (standard deviation and capacity factor) while it is at least equal in the other parameter.

\begin{definition}
The set of all efficient portfolios is the efficient frontier.
\end{definition}
\begin{definition}
Let $\mathcal{E}_{C}$ be the efficient frontier obtained when solving Model C, i.e., the model formulation that maximizes portfolio generation and minimizes standard deviation at a fixed capacity. 
\end{definition}
\begin{definition}
Let $\mathcal{E}_{G}$ be the efficient frontier obtained when solving Model G, i.e., the model formulation that minimizes portfolio installed capacity and minimizes standard deviation at a fixed generation.
\end{definition}
\begin{definition}
Let $p^{C} \in \mathcal{E}_{C}$ be a portfolio from Model C.
\end{definition}
\begin{definition}
Let $p^{G} \in \mathcal{E}_{G}$ be a portfolio from Model G.
\end{definition}
\begin{definition} \label{DefCVpoint}
Let $cv \in \mathcal{E}_{C}$ be the portfolio, in Model C, with lowest ratio between standard deviation and capacity factor ($\frac{\sigma_{cv^{C}}}{\mu_{cv}}$).
\end{definition}
\begin{definition}
Given a pair of portfolios $p_x$ and $p_y$, they are equivalent ($p_x \equiv p_y$) if the share of each plant that compose the portfolios is equal in both portfolios.
\end{definition}

\begin{theorem} \label{th:minCV}
Let $cv^G \in \mathcal{E}_{G}$, $cv^C \in \mathcal{E}_C$ and $cv^G \equiv cv^C$, then:
\begin{equation}
\sigma_{cv^G} < \sigma_{p} \quad \forall (p \neq cv) \in \mathcal{E}_{G}
\end{equation}
\end{theorem}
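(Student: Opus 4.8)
The plan is to exploit the scale invariance that links the two models. First I would parametrize any portfolio by its vector of capacity shares $s_i = P_i/\sum_k P_k$ and record the two composition-level quantities that are invariant under rescaling the total capacity $C=\sum_k P_k$: the unit-capacity output (capacity factor) $\bar\mu = \sum_i s_i \mu_i > 0$ and the unit-capacity standard deviation $\tilde\sigma = \sqrt{\sum_i\sum_j s_i s_j \sigma_{ij}}$. Both the generation $\sum_i P_i \mu_i$ and the portfolio standard deviation $\sqrt{\sum_i\sum_j P_i P_j\sigma_{ij}}$ are homogeneous of degree one in $C$, so for a fixed composition they equal $C\bar\mu$ and $C\tilde\sigma$ respectively.

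The key computation is then immediate: in Model G the generation is pinned to a constant $K>0$, which forces $C = K/\bar\mu$, and hence the Model-G standard deviation of that composition is $\sigma = C\tilde\sigma = K\,\tilde\sigma/\bar\mu = K\cdot(\sigma_C/\mu_C)$, i.e. $K$ times its coefficient of variation. Thus minimizing standard deviation at fixed generation is exactly the problem of minimizing the ratio $\sigma_C/\mu_C$ over compositions, and by Definition \ref{DefCVpoint} the minimizer is precisely $cv$.

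Next I would verify the two membership facts. Minimizing $\sigma_C/\mu_C$ forces the portfolio onto $\mathcal{E}_C$: if a composition were dominated in Model C there would be another with no larger $\sigma_C$ and strictly larger $\mu_C$ (or strictly smaller $\sigma_C$ and no smaller $\mu_C$), which strictly lowers the ratio; so the global CV-minimizer coincides with $cv\in\mathcal{E}_C$. For membership in $\mathcal{E}_G$, observe that $cv^G$ attains the global minimum of the Model-G standard deviation over all compositions, so nothing beats it in standard deviation; any portfolio dominating it would need equal standard deviation and strictly smaller capacity, but equal standard deviation means equal CV, which (by the uniqueness argument below) forces the same composition and hence, after rescaling to generation $K$, the same capacity. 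Therefore $cv^G$ is non-dominated, i.e. $cv^G\in\mathcal{E}_G$.

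Finally, the strict inequality reduces to uniqueness of the CV-minimizer. Since $\sigma = K\cdot(\sigma_C/\mu_C)$ is strictly increasing in the composition's coefficient of variation, any $p\in\mathcal{E}_G$ with $p\neq cv$ satisfies $\sigma_{cv^G}<\sigma_p$ as soon as its CV is strictly larger than that of $cv$. Under the standard assumption that the covariance matrix is positive definite, the no-short feasible set is convex and the maximum-Sharpe (equivalently minimum-CV) portfolio is unique, so every distinct composition has strictly larger CV and the claim follows. I expect this uniqueness/strictness step to be the main obstacle: without positive definiteness the efficient frontier could contain a ray-aligned flat segment on which several distinct compositions share the minimal CV, and the strict inequality would fail; the remainder is just the elementary homogeneity computation above.
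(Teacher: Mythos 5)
Your proposal is correct and takes essentially the same route as the paper: your homogeneity computation showing that, at fixed generation $K$, a composition's standard deviation equals $K\cdot(\sigma_C/\mu_C)$ is exactly the paper's normalization argument (its relations $\sigma_{p^C} = \frac{\sigma_p}{C_p}C_F$, $\sigma_{p^G} = \frac{\sigma_p}{G_p}G_F$ and the substitution into the minimal-CV inequality), after which minimality of the coefficient of variation gives the claim. The uniqueness/strictness issue you flag is genuine but is simply assumed in the paper, since Definition~\ref{DefCVpoint} builds the strict inequality $\frac{\sigma_{cv^{C}}}{\mu_{cv}} < \frac{\sigma_{p^C}}{\mu_p}$ for all $p \neq cv$ into the definition of the $cv$ portfolio, whereas you correctly note it requires an assumption such as positive definiteness of the covariance matrix.
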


\begin{proof}
Let $p \in \mathcal{P}$ be any possible portfolio, with expected generation $G_p$, installed capacity $C_p$ and standard deviation $\sigma_p$. In order to verify if $p$ is equivalent to a portfolio in the efficient frontier $\mathcal{E}_{C}$, it is necessary to obtain the standard deviation of the portfolio equivalent to $p$ ($\sigma_{p^C}$) that respects the fixed installed capacity constraint ($C_{F}$) used to define $\mathcal{E}_{C}$. This is equivalent to normalizing the value by the portfolio installed capacity.

Therefore:

\begin{equation}
\sigma_{p^C} = \frac{\sigma_{p}}{C_p} C_{F} \label{eq:DefSDc}
\end{equation}

Conversely, \(\sigma_{p^G}\) is the standard deviation of the portfolio equivalent to \(p\) normalized to the value of fixed generation ($G_F$) used to define $\mathcal{E}_{G}$ .

\begin{equation}
\sigma_{p^G} = \frac{\sigma_p}{G_p} G_F \label{eq:DefSDG}
\end{equation} 

Therefore:

\begin{equation}
\sigma_{p^G} = \sigma_{p^C} \frac{C_p}{G_p} \frac{G_F}{C_F} \label{eq:SDgToSDc}
\end{equation}

\begin{equation}
\sigma_{p^C} = {\sigma_{p^G}}\frac{G_p}{C_p} \frac{C_F}{G_F} \label{eq:SDcToSDg}
\end{equation}

As \(\frac{G_p}{C_p}\) is the portfolio capacity factor (\(\mu_p\)), equation \eqref{eq:SDcToSDg} can be rewritten as:

\begin{equation}
\sigma_{p^C} = \sigma_{p^G} \mu_p \frac{G_F}{C_F} \label{eq:SDcToSDg2}
\end{equation}

By Definition \ref{DefCVpoint}, we have

\begin{equation}
\frac{\sigma_{cv^{C}}}{\mu_{cv}} < \frac{\sigma_{p^C}}{\mu_p}. \label{eq:MinCV}
\end{equation}

By application of Equation (\ref{eq:SDcToSDg}) we get
\begin{equation}
\frac{\sigma_{cv^G} \mu_{cv} \frac{G_F}{C_F}}{\mu_{cv}} < \frac{\sigma_{p^G} \mu_p \frac{G_F}{C_F}}{\mu_p} \label{eq:MinCV2}
\end{equation}
or equivalently
\begin{equation}
\sigma_{cv^G} < \sigma_{p^G} \label{eq:MinPoint}.
\end{equation}

\end{proof}

\begin{theorem}
Any portfolio in the efficient frontier of Model C whose standard deviation is lower than the standard deviation of $cv$ portfolio does not have an equivalent portfolio in Model G's efficient frontier. 
\begin{equation}
\sigma_{p^C} < \sigma_{cv^C} \implies \nexists  (p^{G} \in \mathcal{E}_{G}) \equiv p^{C}  \quad \forall {p^C} \in \mathcal{E}_{C}
\end{equation}
\end{theorem}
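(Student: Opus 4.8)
The plan is to show that the Model~G portfolio $p^{G}$ equivalent to such a $p^{C}$ is strictly \emph{dominated} by $cv^{G}$, so that $p^{G}\notin\mathcal{E}_{G}$. Since any Model~G portfolio equivalent to $p^{C}$ must be exactly $p^{G}$ (the same shares rescaled to the fixed generation $G_{F}$), exhibiting a dominator of $p^{G}$ settles the claim. So I would fix $p^{C}\in\mathcal{E}_{C}$ with $\sigma_{p^{C}}<\sigma_{cv^{C}}$, note that this forces $p\neq cv$, and then assemble two comparisons between $p^{G}$ and $cv^{G}$: one for standard deviation and one for installed capacity.

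For the standard-deviation comparison, I would reuse the derivation culminating in \eqref{eq:MinPoint} from the proof of Theorem~\ref{th:minCV}. That derivation never uses membership of $p$ in $\mathcal{E}_{G}$; it only invokes the scaling identity \eqref{eq:SDcToSDg} together with the defining minimality of the ratio $\sigma_{cv^{C}}/\mu_{cv}$ among all portfolios (Definition~\ref{DefCVpoint}, inequality \eqref{eq:MinCV}). Hence for every $p\neq cv$ it already yields $\sigma_{cv^{G}}<\sigma_{p^{G}}$, and in particular for the portfolio at hand.

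For the capacity comparison the key step is to exploit that $cv^{C}$ is non-dominated in Model~C. Suppose, for contradiction, that $\mu_{p}\ge\mu_{cv}$. Because capacity is held fixed at $C_{F}$ in Model~C, this means $G_{p^{C}}\ge G_{cv^{C}}$, while by hypothesis $\sigma_{p^{C}}<\sigma_{cv^{C}}$; then $p^{C}$ would have at least as high generation and strictly lower standard deviation than $cv^{C}$, dominating it and contradicting $cv^{C}\in\mathcal{E}_{C}$. Therefore $\mu_{p}<\mu_{cv}$. Passing to Model~G, where generation is fixed at $G_{F}$ and installed capacity equals $G_{F}/\mu$, this gives $C_{p^{G}}=G_{F}/\mu_{p}>G_{F}/\mu_{cv}=C_{cv^{G}}$.

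Combining the two comparisons, $cv^{G}$ has strictly smaller standard deviation and strictly smaller installed capacity than $p^{G}$, so $cv^{G}$ strictly dominates $p^{G}$ in Model~G. As $cv^{G}$ is feasible (indeed $cv^{G}\in\mathcal{E}_{G}$ by Theorem~\ref{th:minCV}), $p^{G}$ is dominated and cannot belong to $\mathcal{E}_{G}$, which is the assertion. The only delicate point I anticipate is the capacity comparison: it must be read off as monotonicity of the Model~C frontier directly from the non-domination of $cv^{C}$, without assuming any parametric or convex shape of $\mathcal{E}_{C}$; everything else is the already-derived scaling identity together with the minimality of the CV point.
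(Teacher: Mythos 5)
Your proof is correct, and it shares the paper's overall skeleton: exhibit $cv^{G}$ as a dominator of $p^{G}$ in Model G, getting the standard-deviation half from the derivation behind Theorem~\ref{th:minCV} (you correctly observe that this derivation applies to arbitrary portfolios, not only frontier ones, which is exactly the reading the paper's own proof needs as well) and the capacity half from the inequality $\mu_{p} < \mu_{cv}$. Where you genuinely diverge is in how $\mu_{p} < \mu_{cv}$ is obtained. The paper derives it algebraically from the strict CV-minimality of Definition~\ref{DefCVpoint}: cross-multiplying $\frac{\sigma_{cv^{C}}}{\mu_{cv}} < \frac{\sigma_{p^{C}}}{\mu_{p}}$ gives $\frac{\mu_{p}}{\mu_{cv}} < \frac{\sigma_{p^{C}}}{\sigma_{cv^{C}}} < 1$, the last inequality being the hypothesis. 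You instead argue by contradiction from the efficiency of $cv^{C}$ in Model C: if $\mu_{p} \ge \mu_{cv}$ held, then, capacity being fixed at $C_{F}$, the portfolio $p^{C}$ would weakly improve generation while strictly lowering standard deviation, so $cv^{C}$ would be dominated, contradicting $cv \in \mathcal{E}_{C}$. Your route uses a weaker property of $cv$ --- only that it is non-dominated, not that it is the strict (hence unique) minimizer of the CV ratio --- and it makes explicit the translation $C_{p^{G}} = G_{F}/\mu_{p}$ that the paper leaves implicit when passing from capacity factors to Model G capacities; the paper's algebra, in exchange, delivers the quantitative bound $\mu_{p}/\mu_{cv} < \sigma_{p^{C}}/\sigma_{cv^{C}}$ directly from the definition. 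One small inaccuracy to fix: you justify $cv^{G} \in \mathcal{E}_{G}$ ``by Theorem~\ref{th:minCV},'' but that theorem \emph{assumes} $cv^{G} \in \mathcal{E}_{G}$ as a hypothesis rather than proving it. Fortunately your argument only needs $cv^{G}$ to be feasible (any share vector rescaled to total generation $G_{F}$ is), so the parenthetical can simply be dropped; alternatively, note that the strict SD-minimality over all portfolios established inside Theorem~\ref{th:minCV}'s proof itself rules out any dominator of $cv^{G}$.
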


\begin{proof}
In order to show that a portfolio $p$ does not belong to the efficient frontier, it is sufficient to show that exists, in the efficient frontier, any portfolio that surpasses $p$ in any parameter while it is at least equal in the other parameter. According to Theorem \ref{th:minCV}, $\sigma^G_{cv} < \sigma^G_{p}$. Therefore, to prove the current theorem, it is sufficient to show that $\mu_{cv} \ge \mu_{p}$.
Therefore, by Definition \ref{DefCVpoint}:
\begin{equation}
\frac{\sigma_{cv^{C}}}{\mu_{cv}} < \frac{\sigma_{p^C}}{\mu_p} \label{eq:MinCV3}
\end{equation}

If $\sigma_{p^C} < \sigma_{cv^C}$, it implies that

\begin{equation}
\frac{\mu_{p}}{\mu_{cv}} < \frac{\sigma_{p^C}}{\sigma_{cv^{C}}} < 1, \label{eq:RatioMuSigma}
\end{equation}
which simplifies to
\begin{equation}
\mu_{p} < \mu_{cv} \label{eq:lowMu}.
\end{equation}
\end{proof}

\section{Conditional Value-at-Risk}
\label{AppendixCvar}

Modern Portfolio Theory's origins are firmly rooted in finance, while we study security of supply in electricity grids. The objective of this paper is essentially a reinterpretation of MPT's finance roots towards an energy economics perspective. Each problem domain is equipped with a distinct notion of risk. In finance, risk is understood as a financial loss resulting from a investment position. In contrast within this work, we reframe risk as a loss of load in an electricity grid.
In order to do to this we require a sound understanding of the interplay of MPT's key concepts: the loss function \& the risk measure. We review them briefly in the following.

\subsection{What is a loss function?}

Modern Portfolio Theory is a tool for risk management.
As such it aims to avoid undesirable events.
The exact nature of these events may vary and needs to be precisely defined by the modeller in terms of a so-called loss-function $f(Y,\lambda)$.
For an intuitive understanding of $f$ it is particularly necessary to understand the difference between the two inputs $Y$ and $\lambda$.
Both inputs should be understood as vectors.
$Y$'s components are random variables, while $\lambda$ are non-random parameters that acts as a decision variable in an optimization problem that involves $Y$. The interpretation here is that $Y$ models the risks we do not control, while $\lambda$ models what we do control, namely our strategy to deal with the risk of $Y$.
Hence, a loss depends as much on chance (as governed by $Y$) as on our risk-management strategy (as given by $\lambda$).
Consequently, fixing the parameter $\lambda$ means fixing the risk-management strategy. Hence, $f(\bullet,\lambda)$ is purely a random variable ("What variations in loss are expected under a fixed risk-management strategy"), $f(Y_i,\bullet)$ is purely an optimization problem ("What is the best risk management strategy against a particular outcome of the random variable?") and $f(\bullet,\bullet)$ is both varying randomly and an optimization problem ("What is the best risk management strategy against a random variable and all its possible outcomes?").

In the context of portfolio management, $Y_i$ would be the values of assets, while $\lambda_i$ would be the ratio of assets in a portfolio that can be used to hedge against variations in value.
Moreover, a loss may be modelled non-uniquely by various functional forms. 
In the context of security of electricity supply, we reinterpret $Y_i$ as the production of plant $i$ and the special case of $-Y_L$ as the electricity demand. Furthermore, to denote the difference between the standard approach and our proposal, we relabel the decision variables $\lambda_i=P_i$ that denote installed capacity of each plant in a portfolio that can be used to hedge against electricity production below demand. We close by remarking that $P_L$ is a special case. It should usually not be modelled as a decision variable, since grid suppliers typically desire to avoid demand cuts.

\subsection{What is the CVaR?}

The Conditional Value-at-Risk (CVaR) in-turn is a method to quantify the risk incurred by a loss-function $f(Y,\lambda)$.
Hence, the CVaR depends directly on the employed risk function. 
Moreover, since a loss may be modelled non-uniquely by various functional forms $f$, different risk models may lead to differing CVaRs.
CVaR is the mean of the losses that have cumulative probability over a certain safety threshold $\beta$. 
The CVaR operates under the assumption that a certain safety threshold VaR (Value-at-Risk) should not be exceeded and that conversely, the losses below the threshold are negligible. Once such a threshold has been set the \textit{CVaR is the mean loss exceeding the safety limit VaR.} 

\subsection{How to compute the CVaR?}

The CVaR has several appealing mathematical properties, but it can not be computed directly. However, this problem can be side-stepped.
In a lengthy technical derivation, \textcite{rockafellarOptimizationConditionalValueatrisk2000} show that the auxilliary function
\begin{equation}
    F_{\beta}(\lambda,\alpha) = \alpha + \frac{\mathbb{E}\left( \max(f(Y,\lambda)-\alpha),0 \right)}{(1-\beta)}
    \label{eq:aux_f}
\end{equation}
can be used to compute the CVaR indirectly.
Here, $f(Y,\lambda)$ is the loss-function with $Y$ being the random variable characterising the risk and $\lambda$ the decision variables of the risk-management strategy (i.e. portfolio weights). Additionally, we encounter two new model parameters $\alpha$ and $\beta$. 

$\beta$ denotes the desired level of security (i.e. the probability not to exceed the security threshold $\alpha$). Hence, $\beta$ is a probability between 0 and 1. Note, that while $\beta$ denotes the probability to remain within the safety limit, $(1-\beta)$ denotes the probability to make loss (that exceeds $\alpha$). Therefore, in \eqref{eq:aux_f} the expected loss beyond $\alpha$ is divided by the probability of the severe loss occurring.

Note, that the evaluation of the expectation values is not necessarily easy, but depends on the properties of the loss $f$.
However, a possible approach is to simply estimate the expectation by sampling $Y_{t}$ from the random variable $Y$
\begin{equation}
    \alpha + \frac{\mathbb{E}\left( \max(f(Y,\lambda)-\alpha),0 \right)}{(1-\beta)}
    \approx
    \alpha + \frac{1}{(1-\beta)}\frac{\sum_{t=1}^{T}\max(f(Y_{t},\lambda)-\alpha,0)}{T}
    \approx
    \alpha + \frac{1}{(1-\beta)}\frac{\sum_{m=1}^{M}\max(f(Y_{t_{m}},\lambda)-\alpha,0)}{M},
\end{equation}
where the first approximation will hold if the sampling correctly represents $Y$ and the second approximation is a subsampling  with $\forall m ~ in ~ M \subseteq T$ that will hold if the number of samples is not too small or represents the complete time series well.

The way to minimize CVaR is to compute
\begin{equation}
    \min_{\alpha,\lambda} F_{\beta}(\lambda,\alpha) = \hat{F}_{\beta}(\hat{\lambda},\hat{\alpha}).
    \label{eq:aux_f2}
\end{equation}
Indeed, minimizing \eqref{eq:aux_f} yields the value of the CVaR and the arguments $\hat{\lambda},\hat{\alpha}$ yield additional information, where  $\hat{\alpha}$ equals the so called Value-at-Risk and $\hat{\lambda}$ the optimal decision variables. The minimization of \eqref{eq:aux_f2} is efficiently possible, if $F_{\beta}$ is convex. It can be shown that $F_{\beta}$ is convex, if the loss $f$ and $\alpha$ domain are convex too. This makes the CVaR calculable via assumptions on $f$. 
We can simplify
\begin{align}
    \min_{\alpha,\lambda} \alpha + \frac{1}{(1-\beta)}\frac{\sum_{m=1}^{M}\max(f(Y_{t_{m}},\lambda)-\alpha,0)}{M} 
    = &\min_{\alpha,\lambda, Z_{t_{m}}} &\alpha + \frac{1}{(1-\beta)}\frac{\sum_{m=1}^{M}Z_{t_{m}}}{M} ~ & \label{eq:MinCVaR} \\
    & ~ s.t.:  &Z_{t_{m}} \geq f(Y_{t_{m}},\lambda)-\alpha ~ & \forall m \in M \\
    &  &Z_{t_{m}} \geq 0 ~ & \forall m \in M
\end{align}
by rewriting the maximum into a positivity constraint. 

Instead of minimizing CVaR value, it is possible to constraint it. In this case, it is necessary to define the upper limit to CVaR. Therefore, the objective function \eqref{eq:MinCVaR} can be turned into the constraint

\begin{equation}
\alpha + \frac{1}{(1-\beta)}\frac{\sum_{m=1}^{M}Z_{t_{m}}}{M} \le \omega 
\end{equation}

Here, $\omega$ can be seen as the acceptable limit of losses. $\omega$ will be  a number of the same unit as the losses. This is exemplified for instance by either $\omega \longrightarrow \infty$ where \eqref{eq:aux_f} reduces to 0 ("If everything is acceptable there is no risk") or setting $\omega=0$ where \eqref{eq:aux_f} reduces to the expectation of the loss ("If nothing is acceptable we risk as much as we expect to lose").

We have remained until now completely general without specifying the form of the loss $f$. Therefore the derivation is  actually independent of the problem domain. However, for explicitness sake we finally choose the loss function such that it models loss of load. We set  the decision variables $\lambda_i=P_i$ and rewrite
\begin{equation}
    f(Y_{t_{m}},\lambda) = \sum_{i=1}^{N}(Y_{t_{m},i} P_i - Y_{t_{m},L} P_L),
\end{equation}
where $Y$ ratio of plant generation per installed capacity and $P$ the installed capacity of plants in the portfolio. 
We decompose the normalized generation into (possibly sub-sampled) normalized generation per time $Y=\sum_{t=1}^{T}Y_{t} \approx \sum_{t_{m}=1}^{M} Y_{t_{m}}$ and each time steps is decomposed into positive plantwise generation contributions $\sum_{i=1}^{N}Y_{t_{m},i}$ of $N$ plants and the negative system demand $- Y_{t_{m},L} P_L)$, resulting in the energy balance. Note, that in the only slight deviation from the standard case of portfolio theory $P_{L}$ is not a decision variable, since it models the demand that is not under our control. With this choice, we have derived the risk constraints in Equations \eqref{eq:RiskCVaR}, \eqref{eq:RiskZ} and \eqref{eq:RiskNonNeg}, with the only difference that these constraints refer to the energy balance -- in which we define a lower limit to CVaR -- instead of the loss function -- in which we define an upper limit. Therefore, some adaptations were made to the constraints to reflect this difference.
\end{document}